\documentclass[11pt,a4paper]{article}

\usepackage{mies}

\usepackage{amsthm}
\usepackage{enumitem}

\newtheorem{theorem}{Theorem}
\newtheorem{proposition}{Proposition}

\newtheorem{corollary}{Corollary}
\theoremstyle{definition}
\newtheorem{definition}{Definition}
\newtheorem{assumption}{Assumption}
\theoremstyle{remark}


\title{Vibe Coding Kills Open Source\thanks{We thank Gergely Orosz, Eduardo Arino de la Rubia and Kevin Xu for helpful comments. This project was funded by the European Research Council (ERC Advanced Grant agreement number 101097789). The views expressed in this research are those of the authors and do not necessarily reflect the official view of the European Union or the European Research Council. Project no.\ 144193 has been implemented with the support provided by the Ministry of Culture and Innovation of Hungary from the National Research, Development and Innovation Fund, financed under the KKP\_22 funding scheme.  Békés thanks NRDIF for financial support under OTKA Project \#147121 .}}

\author{
Miklós Koren\thanks{Central European University, KRTK, CEPR and CESifo. \emph{Corresponding author:} miklos@koren.work}
\and Gábor Békés\thanks{Central European University, KRTK, and CEPR.}
\and Julian Hinz\thanks{Bielefeld University and Kiel Institute for the World Economy.}
\and Aaron Lohmann\thanks{Kiel Institute for the World Economy.}
}

\date{\today}

\hypersetup{
    pdfauthor = {Miklós Koren, Gábor Békés, Julian Hinz, Aaron Lohmann},
    pdftitle  = {Vibe Coding Kills Open Source},
}

\begin{document}

\maketitle

\vspace{2em}

\begin{abstract}
Generative AI is changing how software is produced and used. In vibe coding, an AI agent builds software by selecting and assembling open-source software (OSS), often without users directly reading documentation, reporting bugs, or otherwise engaging with maintainers. We study the equilibrium effects of vibe coding on the OSS ecosystem. We develop a model with endogenous entry and heterogeneous project quality in which OSS is a scalable input into producing more software. Users choose whether to use OSS directly or through vibe coding. Vibe coding raises productivity by lowering the cost of using and building on existing code, but it also weakens the user engagement through which many maintainers earn returns. When OSS is monetized only through direct user engagement, greater adoption of vibe coding lowers entry and sharing, reduces the availability and quality of OSS, and reduces welfare despite higher productivity. Sustaining OSS at its current scale under widespread vibe coding requires major changes in how maintainers are paid.
\end{abstract}

\vfill

\noindent\textbf{JEL Codes:} O33, L86, D85

\noindent\textbf{Keywords:} Open Source Software, Artificial Intelligence, Network Externalities, Nested Logit, Software Economics

\newpage

\section{Introduction}
\label{introduction}

Generative AI is reshaping software development. AI coding assistants such as Claude Code, Cursor, and Lovable let users translate intent into working applications with little or no manual coding. This AI-mediated mode of building software is often called ``vibe coding.'' Vibe coding reduces the cost of producing software, but it also changes how users interact with the software ecosystem. Traditionally, a developer selects packages, reads documentation, and interacts with maintainers and other users. Under vibe coding, an AI agent can select, compose, and modify packages end-to-end, and the human developer may not know which upstream components were used.

This shift raises a general equilibrium question about the sustainability of open source software (OSS). OSS is a nonrival input into producing more software, and it generates large social value relative to its direct production cost \citep{hoffmann-nagle-zhou-2024}. Yet many OSS projects rely on visibility and engagement from direct users---documentation visits, bug reports, public Q\&A, and reputation---to sustain maintenance and capture private returns \citep{lerner-tirole-2002,jones-2020-eseur}. If AI mediation substitutes for direct interaction, then the technology that makes software easier to use may simultaneously erode the engagement-based channel that funds and motivates its supply. We ask whether the productivity gains from vibe coding outweigh the loss of appropriable demand once developer entry and selection respond.

We build a tractable model of the OSS ecosystem in the spirit of monopolistic competition with endogenous variety \citep{krugman-1980} and heterogeneous-project selection \citep{melitz-2003}. Developers incur an up-front development cost before project quality is realized, draw project quality from a Pareto distribution, and then decide whether to release the project as OSS by paying a fixed sharing cost. Users value variety and choose among available packages under a discrete-choice demand system. In addition, after selecting a package, users choose whether to interact with it directly or through an AI agent. This nested choice captures that vibe coding is a usage technology that affects both user utility and the degree of engagement that maintainers can monetize.

Vibe coding affects the OSS ecosystem through two channels. The first is a productivity channel: by reducing the effective cost of using a given package, AI raises user utility and lowers the cost of producing new software that builds on existing components. Even before state-of-the-art vibe coding tools, field experiments documented sizable productivity gains from AI coding assistance \citep{peng-2023-copilot-productivity,cui-demirer-jaffe-musolff-peng-salz-2025}. The second is a demand-diversion channel: when users rely on AI agents rather than direct interaction, maintainers capture less engagement and therefore less private return per unit of usage. These channels interact because entry and sharing are endogenous. Better usage technology raises the value of the ecosystem as an intermediate input and would tend to increase entry. But a contraction in monetizable engagement shrinks the effective market for OSS maintainers and discourages sharing and entry. Since ``customer demand motivates the supply of energy that drives software ecosystems'' \citep{jones-2020-eseur}, a technology that alters how demand is expressed can reshape supply even if total usage rises.

Our main result is that under traditional OSS business models, where maintainers primarily monetize direct user engagement (higher visibility leading to paid opportunities or other forms of appreciation), higher adoption of vibe coding reduces OSS provision and lowers welfare. In the long-run equilibrium, mediated usage erodes the revenue base that sustains OSS, raises the quality threshold for sharing, and reduces the mass of shared packages. Variety shrinks and the average quality of shared OSS falls, so user utility can decline despite better AI. The decline can be rapid because the same magnification mechanism that amplifies positive shocks to software demand also amplifies negative shocks to monetizable engagement. In other words, feedback loops that once accelerated growth now accelerate contraction.

The key intuition behind the negative channel is that when mediated adoption is responsive to AI capability, direct engagement can fall faster than development costs decline. Utility falls because entry decisions generate a business-stealing externality: when a new package enters, it attracts users away from existing packages, but the entrant does not internalize that this diversion reduces the engagement-based returns that other maintainers rely on. The strength of this diversion depends on how responsive users are to improvements in AI capability: when AI tools get better, how quickly do users switch away from direct interaction? The rapid diffusion of AI coding assistants suggests high responsiveness; in the policy section, we discuss how to quantify it using adoption and usage data \citep{demirer-fradkin-tadelis-peng-2025}.

We also analyze two extensions. First, we consider a benchmark in which vibe coding is used only by professional software developers and does not directly mediate final-user consumption. In this case, the demand-diversion channel is absent: AI lowers development costs without eroding engagement-based monetization. The equilibrium features higher entry and higher average quality of shared OSS. This benchmark helps interpret short-run dynamics in which AI is primarily a developer productivity tool, while vendors increasingly target non-developer users as well \citep{anthropic-2026-cowork}. Second, we allow for alternative monetization arrangements and ask what is the lowest per-user monetization that sustain OSS entry at its current level. This can be lower than the current level, because the increase in productivity partly offsets the loss of developer reward. But, under plausible parameters, per-user monetization must remain very close to current levels to avoid large declines in OSS provision.

To keep the analysis tied to how OSS is produced and used, we tailor the model to four salient industry features: large fixed costs and nonrivalry, user love of variety, heavy-tailed heterogeneity in project outcomes with endogenous selection into sharing, and software as an intermediate input into producing more software (a ``software-begets-software'' feedback). These features are quantitatively important for OSS, where most repositories attract no attention and a small upper tail accounts for a large share of usage. 

The remainder of this paper is structured as follows. Section~\ref{sec:industry} documents industry trends and institutional details that motivate the model, including rapid AI diffusion, engagement substitution under AI mediation, and concentration in OSS outcomes. Section~\ref{a-model-of-the-open-source-software-ecosystem} develops the model and derives equilibrium and welfare results in the baseline economy and under vibe coding. Section~\ref{calibration} discusses parameter choices and illustrates the quantitative implications. Section~\ref{sec:conclusion} concludes and discusses implications for platform design and policy.
\section{Industry Trends and Institutional Details}\label{sec:industry}

This section provides institutional background and descriptive patterns that discipline the model. We emphasize three facts. First, AI-assisted coding has diffused rapidly and its capabilities have improved quickly, making the technology shock quantitatively large. Second, AI mediation appears to substitute for direct engagement with upstream knowledge producers, weakening the engagement-based channel through which many open source software (OSS) projects capture private returns. Third, outcomes in OSS are highly concentrated, consistent with selection into sharing and heavy-tailed project ``quality''.

\subsection{AI Coding Adoption and Capability Growth}

AI-assisted coding has become embedded in production at large software organizations. By October 2024, more than a quarter of all new code at Google was generated by AI and subsequently reviewed and accepted by engineers \citep{pichai-2024}. At Anthropic, CEO Dario Amodei reported even higher shares in September 2025, stating that ``70, 80, 90 percent of the code is written by Claude'' \citep{amodei-2025}. Consistent with these adoption rates reflecting a productivity shifter rather than full automation, qualitative evidence indicates that experienced developers use agents as assistance tools while retaining control over system design and core implementation decisions \citep{huang-2025-control}.

Micro-level evidence from GitHub corroborates rapid diffusion. Using commit-level data from 170{,}000 developers, \citet{daniotti-2025-ai-code} estimate that by the end of 2024, AI generated roughly 29--30\% of Python functions authored by U.S.\ contributors, and adoption is associated with a 3.6\% increase in quarterly code contributions.

Capabilities have improved quickly along relevant dimensions. On SWE-bench,\footnote{\href{https://www.swebench.com}{https://www.swebench.com}} which evaluates issue resolution on real GitHub repositories, the best-performing model in 2024 (Claude~2) solved only 1.96\% of issues \citep{jimenez-2024-swebench}. By early 2026, the same model family (Claude~4.5) solved 74.2\% of issues. Rapid improvements in model capability help explain why adoption can accelerate sharply once performance crosses a usability threshold.

Figure~\ref{fig:ai-adoption} summarizes adoption patterns using two complementary sources. Panel (a) reports the share of businesses using AI by industry from the U.S.\ Census Bureau’s Business Trends and Outlook Survey (BTOS), showing that information-intensive industries lead adoption. Panel (b) reports AI intensity in software development based on GitHub commit data from \citet{daniotti-2025-ai-code}. In the model, these patterns motivate treating AI as a large, fast shock to the productivity of using and composing existing software.

\begin{figure}[t]
\centering
\begin{minipage}[t]{0.48\textwidth}
\centering
\caption*{(a) Business AI adoption by industry}
\vspace{-0.3em}
\includegraphics[width=\textwidth]{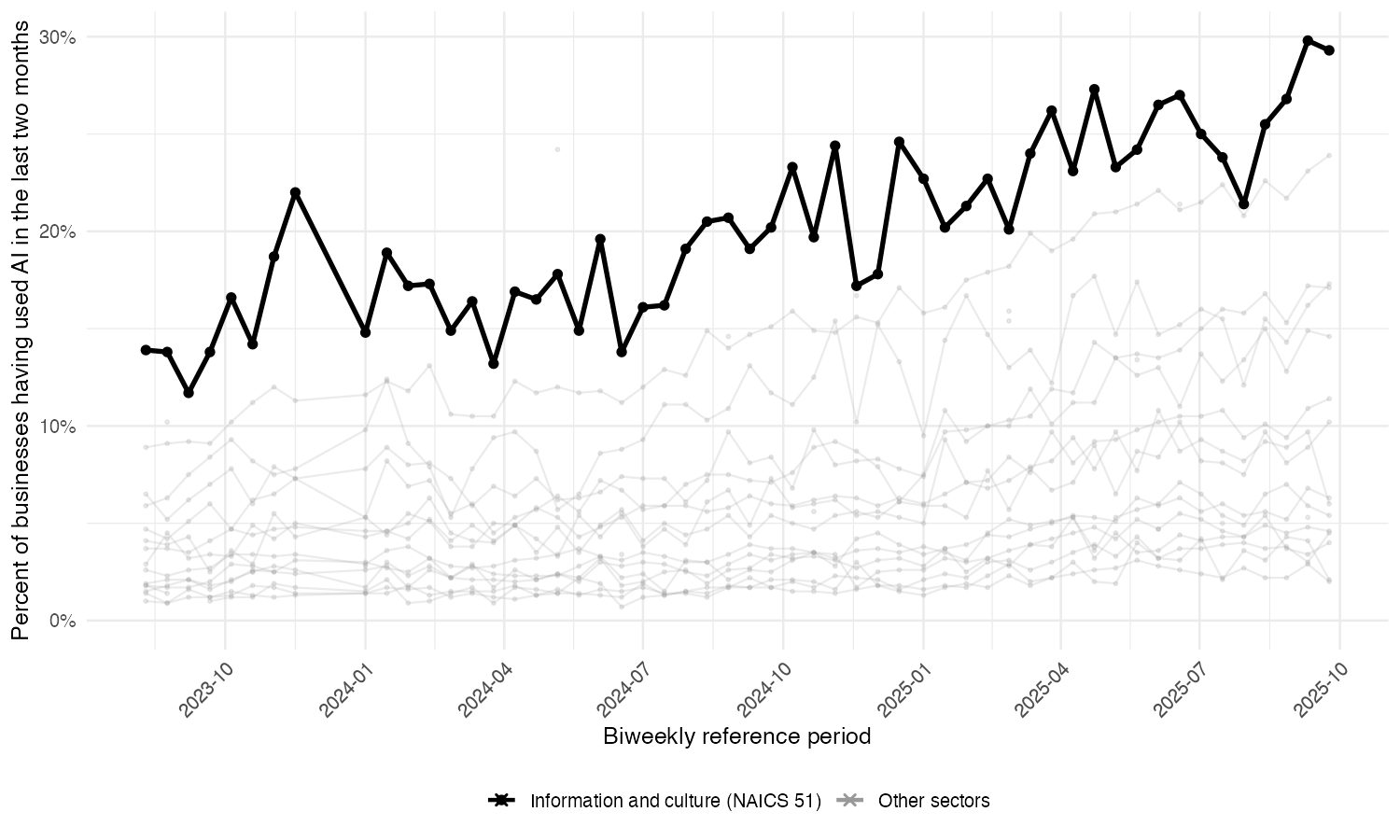}
\end{minipage}\hfill
\begin{minipage}[t]{0.48\textwidth}
\centering
\caption*{(b) AI use in software development}
\vspace{-0.3em}
\includegraphics[width=\textwidth]{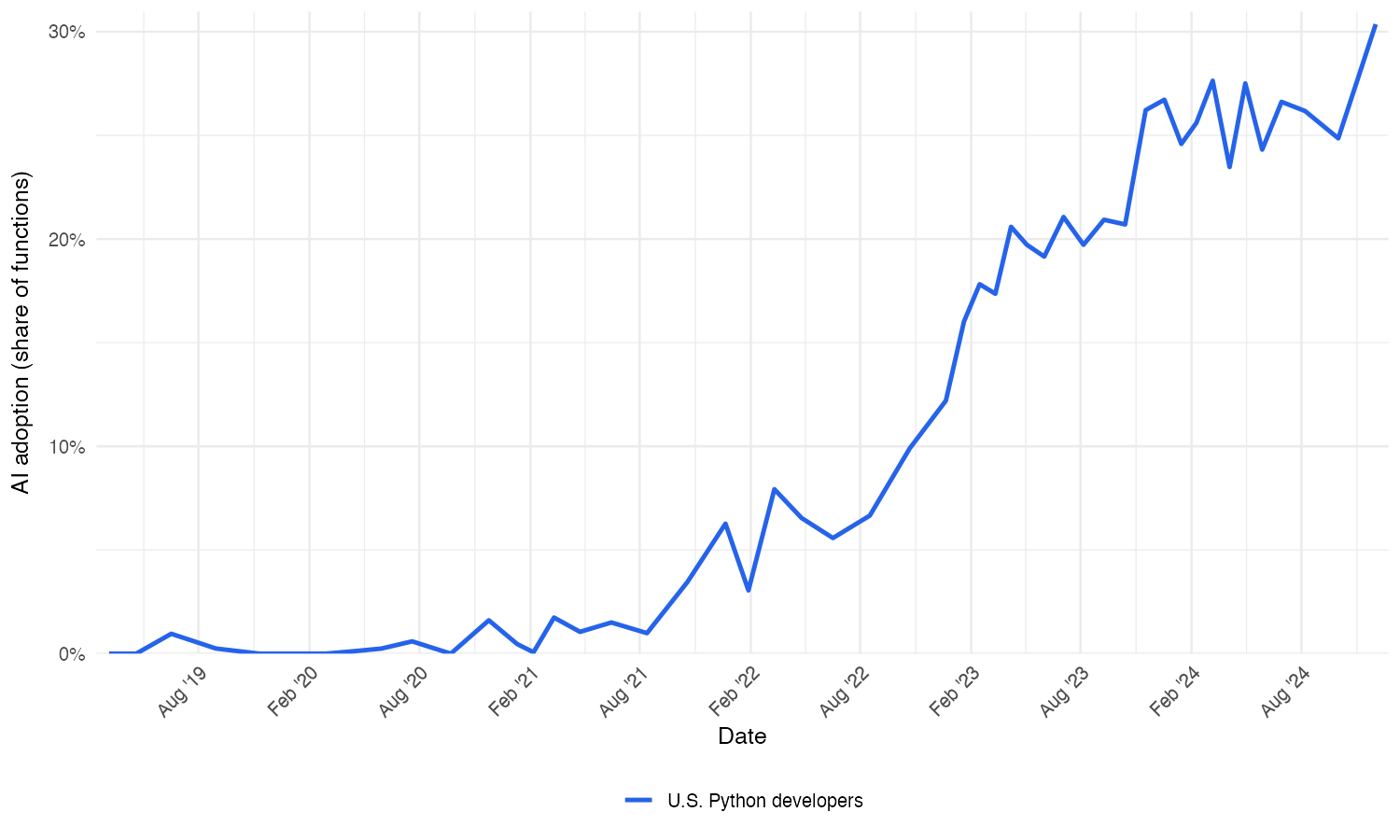}
\end{minipage}

\caption{Adoption of AI across industries and within software development.}
\label{fig:ai-adoption}

\begin{flushleft}
\footnotesize
\emph{Notes:} Panel (a) reports the share of businesses using AI by industry, based on survey evidence from the U.S.\ Census Bureau’s Business Trends and Outlook Survey (BTOS); information-intensive industries exhibit the highest adoption rates. Panel (b) reports the estimated share of software code written with AI assistance using GitHub commit data from \citet{daniotti-2025-ai-code}.
\end{flushleft}
\end{figure}

\subsection{How OSS Maintainers Capture Private Returns}

OSS generates large social value relative to its direct production cost \citep{hoffmann-nagle-zhou-2024}. Individual maintainers and small teams therefore typically sustain projects by capturing only a small fraction of the surplus they create, through channels that are closely tied to user visibility and engagement \citep{lerner-tirole-2002,jones-2020-eseur}. These channels include reputation and career opportunities (downloads, stars, citations), consulting and support leads that originate from documentation and community interaction, and paid complements such as hosted versions or enterprise add-ons. Many of these channels rely on direct interaction: a user who reads documentation, opens an issue, or asks a public question creates attention and signals demand.

The importance of engagement-based monetization varies across market segments. A large share of monetized activity around OSS occurs in enterprise services: the large enterprises segment accounted for more than 69.0 percent of the 2022 open source services market \citep{grand-view-research-2022-oss-services}. In contrast, a large fraction of OSS usage occurs in small-team and individual production, where adoption may be high but monetization relies on discovery through community channels rather than formal procurement.

In the model, we summarize private returns by a reward per direct user, and we allow AI mediation to reduce the extent to which usage translates into monetizable engagement.

\subsection{Engagement Substitution Under AI Mediation}

The model’s long-run mechanism relies on a wedge between OSS usage and the direct engagement that sustains maintainers. If AI-mediated assistance substitutes for direct interaction, the monetizable engagement margin can shrink even as adoption expands.

For public Q\&A platforms, \citet{del-rio-chanona-2024} provide causal evidence that access to ChatGPT reduced Stack Overflow activity by about 25 percent within six months relative to counterfactual platforms with limited access. Since then, overall Stack Overflow traffic has continued to decline sharply \citep{Orosz2025}.

Figure~\ref{fig:engagement-substitution} illustrates a similar decoupling for Tailwind CSS. Weekly npm downloads rise steadily, indicating growing usage of the framework, while public questions tagged with \texttt{tailwind-css} decline. Tailwind’s creator reports that ``traffic to our docs is down about 40\% from early 2023 despite Tailwind being more popular than ever'' and that revenue is ``down close to 80\%'' \citep{wathan-2026}. Together, these patterns suggest that AI mediation can divert interaction away from the surfaces where OSS projects monetize and recruit contributors.

\begin{figure}[t]
\centering
\begin{minipage}[t]{0.48\textwidth}
\centering
\caption*{(a) Tailwind: usage vs public Q\&A}
\vspace{-0.3em}
\includegraphics[width=\textwidth]{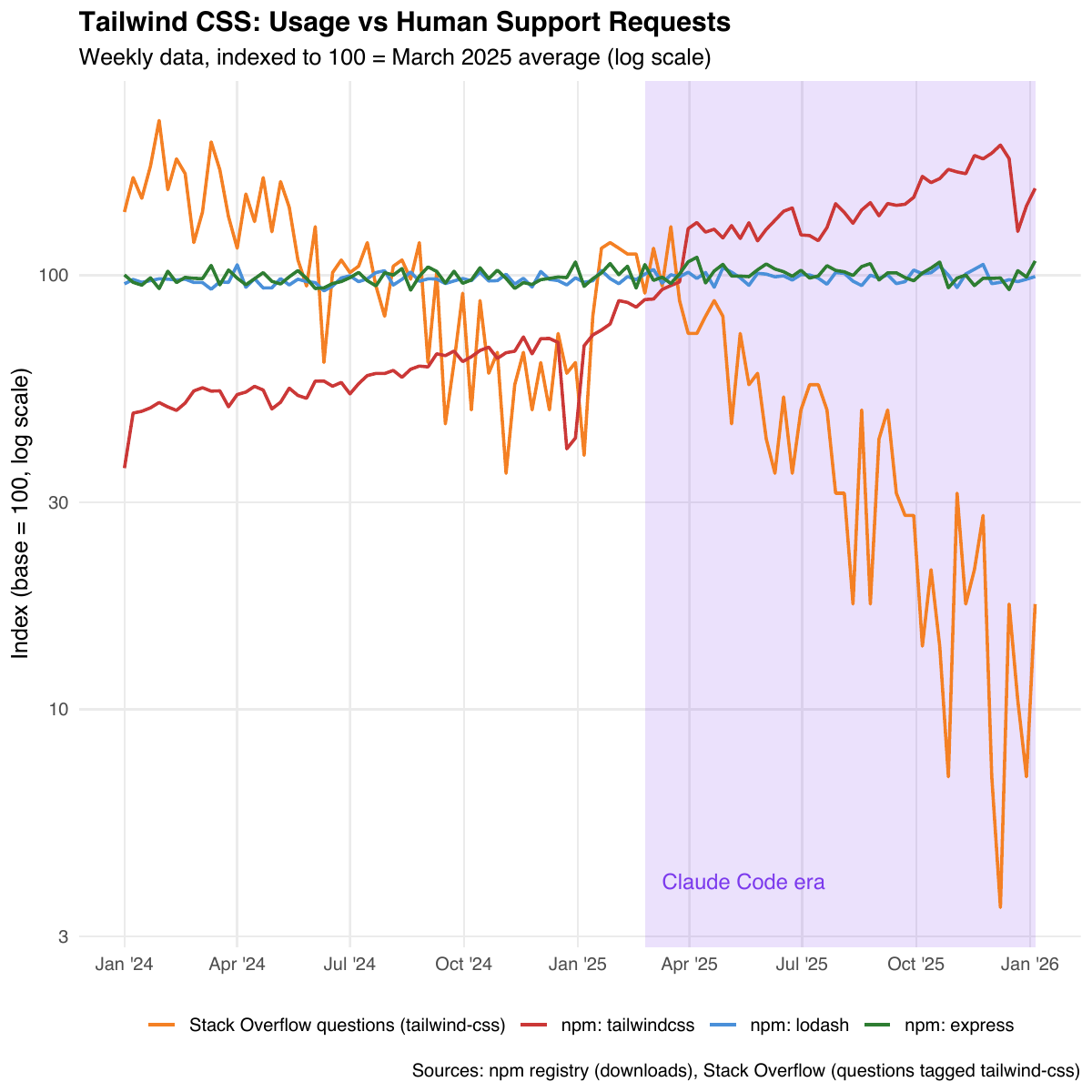}
\end{minipage}\hfill
\begin{minipage}[t]{0.48\textwidth}
\centering
\caption*{(b) Stack Overflow questions over time}
\vspace{-0.3em}
\includegraphics[width=\textwidth]{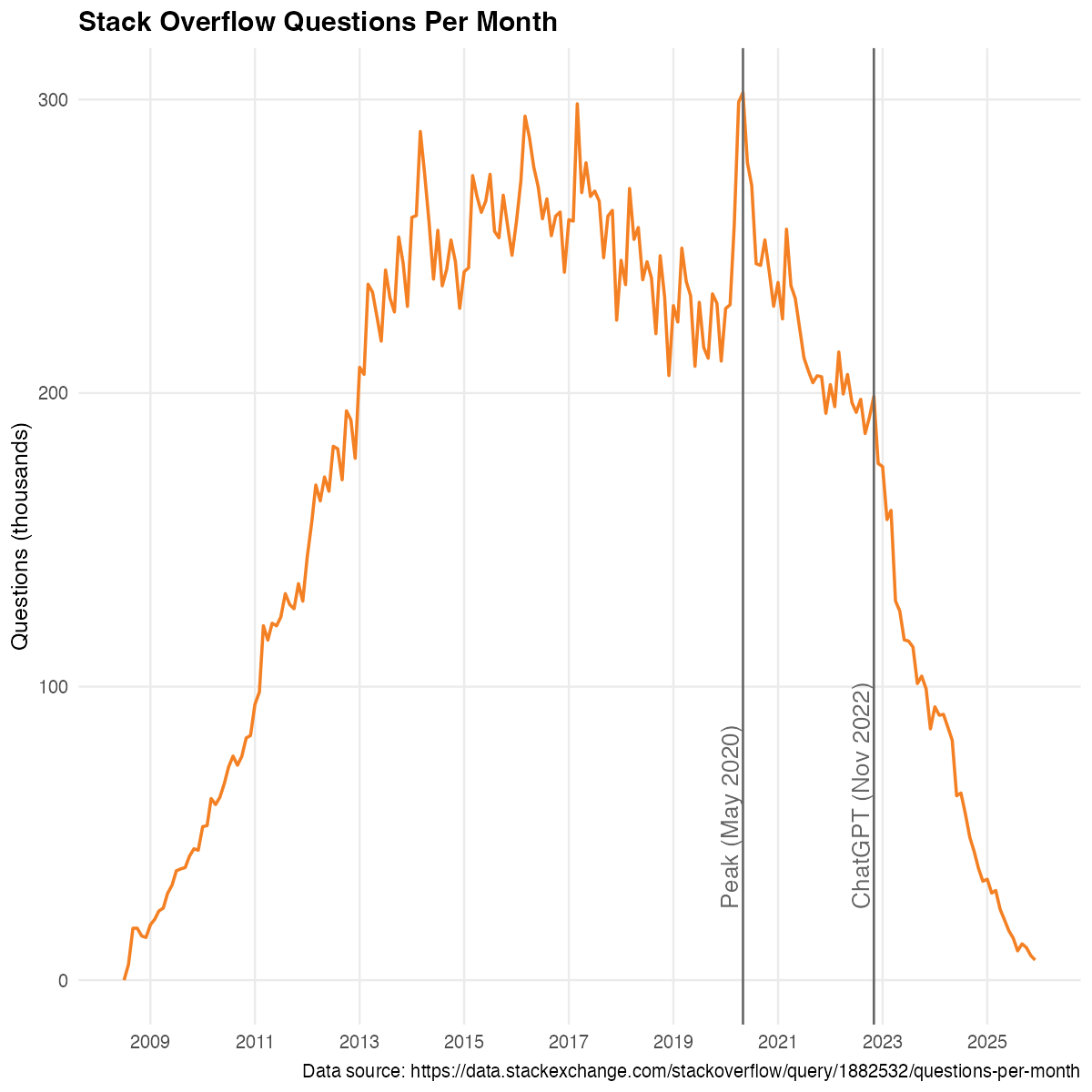}
\end{minipage}

\caption{Usage grows while engagement falls.}
\label{fig:engagement-substitution}
\begin{flushleft}
\footnotesize
\emph{Notes:} Panel (a) compares Tailwind CSS usage (npm downloads) to a measure of public engagement (Stack Overflow questions tagged with \texttt{tailwind-css}). Panel (b) shows the broader decline in Stack Overflow question volume. \citet{del-rio-chanona-2024} provide causal evidence of a post-ChatGPT decline in activity; \citet{wathan-2026} reports a concurrent decline in Tailwind documentation traffic and revenue.
\end{flushleft}
\end{figure}

\subsection{Concentration and Selection in OSS Outcomes}

Outcomes in open source software are highly unequal. Figure~\ref{fig:github-pareto}
plots the distribution of GitHub repository success using stars and downstream
dependencies as proxies for attention and adoption. The figure exhibits an
approximately linear relationship on log--log scales, indicating Pareto-like
behavior in the upper tail: a small fraction of projects attracts a
disproportionate share of attention, usage, and downstream reliance.

\begin{figure}[t]
\centering
\includegraphics[width=0.85\textwidth]{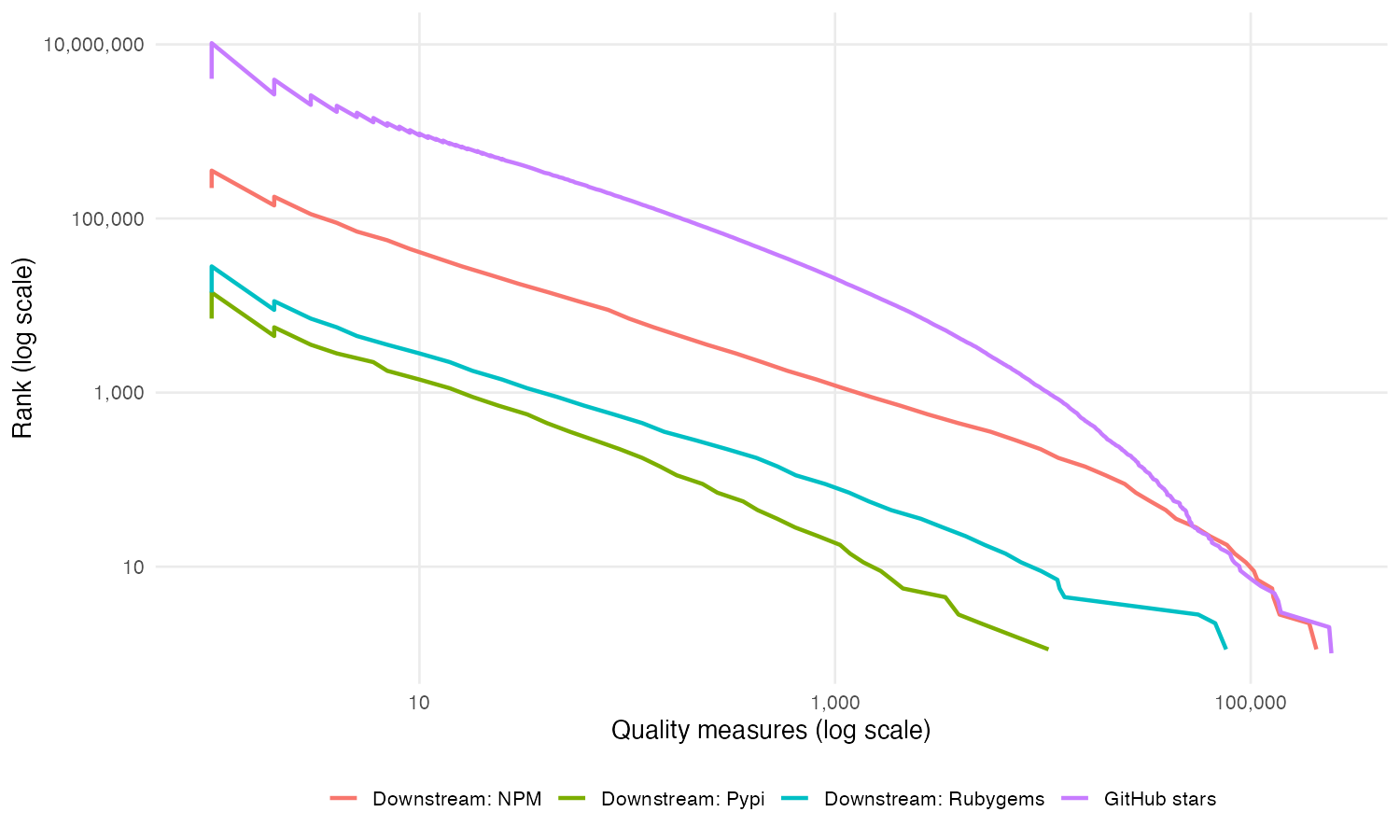}
\caption{Concentration of attention and usage in open source software.}
\label{fig:github-pareto}
\begin{flushleft}
\footnotesize
\emph{Notes:} The figure plots log--log rank--size relationships for GitHub
repositories using stars (attention) and downstream dependencies (usage) as
proxies for project success. Repositories are ranked by outcome, and the figure
reports median values within logarithmic rank bins. The approximately linear
relationship on log--log scales indicates Pareto-like behavior in the upper tail,
with extreme heterogeneity in outcomes. Data sources: GHTorrent \citep{gousios-2013-ghtorrent} and
Libraries.io \citep{libraries-io-2025}.
\end{flushleft}
\end{figure}

Outcomes are concentrated not only in the upper tail but also on the extensive margin: most repositories attract no attention and are never reused. Of approximately 125 million GitHub repositories, only 10.32 million have ever received at least one star. Restricting attention to JavaScript projects, GHTorrent records 13.4 million repositories, of which only 1.28 million have at least one star \citep{gousios-2013-ghtorrent}. Using Libraries.io \citep{libraries-io-2025}, we further observe that 840{,}419 JavaScript repositories are published as npm packages.

Downstream dependence is even more concentrated. Only 392{,}828 projects ever appear as a dependency of another project. Thus, only a small fraction of repositories are both shared and reused, consistent with large fixed costs of packaging and documentation and with strong selection into economically relevant OSS. In the model, these facts motivate a quality distribution with a heavy right tail and an endogenous sharing cutoff.

In addition to the graphical rank–size evidence, Table~\ref{tab:stars-tail-reg} quantifies the heaviness of the upper tail of OSS attention using GitHub stars. We estimate log–rank regressions of the form $\log(\mathrm{rank}) = a + b\cdot[-\log(\mathrm{stars})]$ on binned medians, where the slope $b$ maps directly to the implied Pareto exponent under the model’s canonical orientation. Columns progressively restrict the sample to increasingly elite repositories (top 75\%, 50\%, 10\%, and 5\% by rank).

\begin{table}[t]
\centering
\caption{GitHub stars: tail regressions of log rank on log stars}
\label{tab:stars-tail-reg}
\small

\begingroup
\centering
\begin{tabular}{lccccc}
   \tabularnewline \midrule \midrule
   Dependent Variable: & \multicolumn{5}{c}{log(rank\_mid)}\\
   Model:                 & (1)          & (2)          & (3)          & (4)          & (5)\\  
   \midrule
   \emph{Variables}\\
   Constant               & 16.5$^{***}$ & 18.7$^{***}$ & 23.3$^{***}$ & 27.8$^{***}$ & 25.9$^{***}$\\   
                          & (0.067)      & (0.108)      & (0.139)      & (0.418)      & (0.606)\\   
   Slope: $-\log(stars)$  & 1.08$^{***}$ & 1.33$^{***}$ & 1.80$^{***}$ & 2.24$^{***}$ & 2.07$^{***}$\\   
                          & (0.009)      & (0.013)      & (0.014)      & (0.038)      & (0.054)\\   
   \midrule
   \emph{Fit statistics}\\
   Observations           & 581          & 436          & 291          & 59           & 30\\  
   R$^2$                  & 0.96160      & 0.96240      & 0.98178      & 0.98392      & 0.98161\\  
   \midrule \midrule
   \multicolumn{6}{l}{\emph{IID standard-errors in parentheses}}\\
   \multicolumn{6}{l}{\emph{Signif. Codes: ***: 0.01, **: 0.05, *: 0.1}}\\
\end{tabular}
\par\endgroup

\begin{flushleft}
\footnotesize
\emph{Notes:} Each column reports a binned log--rank regression for GitHub repositories, estimated on median stars within logarithmic rank bins. The dependent variable is $\log(\mathrm{rank}_{\text{mid}})$ and the regressor is $-\log(\mathrm{stars}_{\text{med}})$, so the reported coefficient is the (positive) slope in the canonical Zipf orientation. Column~1 uses the full sample; Columns~2--5 restrict to the top 75\%, 50\%, 10\%, and 5\% of repositories by rank, respectively. Standard errors are iid. The increasing slope in the extreme tail indicates stronger concentration among the most-starred repositories.
\end{flushleft}
\end{table}
\section{A Model of the Open Source Software
Ecosystem}\label{a-model-of-the-open-source-software-ecosystem}

This section develops a simple model of the open source software (OSS)
ecosystem. We begin with a baseline economy without vibe coding: users
choose among shared packages that differ in quality, while developers
decide whether to release their projects as OSS after learning realized
quality. The equilibrium jointly determines software variety and
quality---captured by the mass of shared packages and the endogenous
distribution of quality---together with market-clearing relationships
that link user adoption to developer payoffs. We then introduce vibe
coding as a technology shock that changes how users access OSS and how
developers appropriate value, and we trace how this new usage technology
feeds back into sharing and entry decisions. 

\subsection{Decision Makers and Environment}
\subsubsection{Users} There is a unit mass of final users with
heterogeneous software needs. Each user can choose from a mass \(m_s\)
of shared software packages. Because software is nonrival, the same
package can be used by many users. Package \(k\) has quality \(q_k\) and
delivers flow utility \(u q_k\) to the representative user. The
equilibrium objects \(m_s\), the distribution of qualities, and the
utility shifter \(u\) are taken as given by individual users. Quality is
heterogeneous with distribution \(\mu(q)\).

The utility of user \(i\in[0,1]\) from using package \(k\in[0,m_s]\) is
\begin{equation}
U_{ik} = \nu_{ik} q_k u, \qquad \nu_{ik} \stackrel{\text{i.i.d.}}{\sim} \text{Fr\'{e}chet}\left(\sigma, 1/\Gamma(1-1/\sigma)\right), \label{eq:outer-frechet}
\end{equation} The idiosyncratic preference shock \(\nu_{ik}\) captures
heterogeneity in user needs: a given package may be a good match for some tasks but not others. The shape parameter \(\sigma > 1\)
governs substitutability across packages; the scale
parameter normalizes \(\mathbb{E}[\nu_{ik}] = 1\).

Each user chooses the package \(k\) that maximizes \(U_{ik}\).
Integrating over the preference shocks implies that a given package is
chosen with probability \[
\Pr(k) = \frac{q_k^\sigma}{m_s\int q_j^\sigma \, d\mu(j)}.
\] The adoption probability is increasing in quality. If all packages
have the same quality, users randomize uniformly and \(\Pr(k) = 1/m_s\).
We denote \[
\bar q = \left[\int q_j^\sigma \, d\mu(j)\right]^{1/\sigma}
\] the average quality of distributed software.

Using standard discrete-choice results
\citep{anderson-depalma-thisse-1992}, the expected utility of a user can
be written as \[
\mathcal U = \bar q u m_s^{1/\sigma}.
\] Expected utility is increasing in average software quality \(\bar q\)
and in the utility shifter \(u\), which we later allow to change with
vibe coding. It is also increasing in the mass of available packages
\(m_s\): holding the quality distribution fixed, a larger choice set
raises welfare through a love-of-variety (or better-match) effect, with
curvature governed by \(\sigma\).

A package of quality \(q_k\) is chosen by
\(n(q_k) = \Pr(k) = q_k^\sigma/(m_s \bar q^\sigma)\) users. The user
base is increasing in own quality, but decreasing in the number of
competing packages and in their average quality.

\subsubsection{Developers}\label{developers}

Each package is developed by a single developer (we abstract from
collaboration; for large projects, interpret the team as a single
decision maker). Development proceeds in two stages. First, the
developer incurs an up-front cost to design, build, and test the
developer incurs an up-front cost \(\Phi\) to design, build, and test the
software before its quality is known. This assumption, standard in
heterogeneous-firm models \citep{melitz-2003}, captures that
software quality is hard to predict ex ante
\citep[page 136]{jones-2020-eseur}. Second, after observing
realized quality, the developer decides whether to release the package
as open source by paying a fixed sharing cost \(\tau\), which captures
packaging, documentation, and ongoing maintenance.

Developers share because a larger user base generates private returns---reputation, career benefits, and monetization opportunities. We capture these
channels by allowing developer payoff to depend on the
number of users: \[
\Pi(q_k) = \pi n(q_k) = \pi q_k^\sigma/(m_s \bar q^\sigma),
\] where \(\pi\) is the reward per user. We allow \(\pi\) to depend on
the usage technology and, in particular, on whether users interact
directly with the project or through vibe coding; we specify this
dependence below.

Project quality is heterogeneous. For tractability, assume each
developer draws realized project quality \(q \ge 1\) i.i.d. from a
Pareto distribution with shape parameter \(\gamma>0\) and normalized
scale, \[
\Pr(q>x) = x^{-\gamma}, \qquad x\ge 1,
\] so that the density is \(\gamma q^{-\gamma-1}\). The Pareto
assumption captures the empirical right tail of software project
outcomes and delivers closed-form expressions for equilibrium objects.

After observing \(q\), the developer decides whether to release the
project as OSS. Sharing requires paying a fixed cost \(\tau\) and yields
payoff \(\Pi(q)\). Because user adoption is increasing in project
quality, developer payoff is increasing in \(q\). In particular, \[
\Pi(q) = \pi n(q) = \frac{\pi q^\sigma}{m_s \bar q^\sigma}.
\] A developer shares if and only if the payoff from doing so covers the
sharing cost, \[
\Pi(q) \ge \tau.
\] Since \(\Pi(q)\) is increasing in \(q\), the sharing decision is
characterized by a cutoff \(q_0\) defined by the indifference condition
\(\Pi(q_0)=\tau\): projects with \(q\ge q_0\) are shared, while projects
with \(q<q_0\) are not.

The cutoff \(q_0\) is endogenous. It depends on the equilibrium objects
\(m_s\) and \(\bar q\) that determine user adoption and, hence,
developer payoffs; at the same time, both \(m_s\) and \(\bar q\) depend
on \(q_0\) through the selection of which quality draws are shared.

Given a cutoff \(q_0\), the distribution of quality among shared
projects is a truncated Pareto. Under the quality aggregator defined
above, average quality satisfies \[
\bar q = \Lambda q_0,
\] where
\(\Lambda = \left(\frac{\gamma}{\gamma-\sigma}\right)^{1/\sigma} > 1\).
The condition \(\gamma>\sigma\) ensures that the moment determining
\(\bar q\) is finite. If there is a mass \(m\) of developers who have
built a software package, the mass of shared software is \[
m_s = m\Pr(q\ge q_0) = m q_0^{-\gamma}.
\] A higher cutoff reduces the fraction of projects that are shared
(lower \(m_s\)), but raises the average quality of shared OSS (higher
\(\bar q\)).

Imposing the sharing indifference condition \(\Pi(q_0)=\tau\) and
substituting \(\bar q=\Lambda q_0\) and \(m_s=m q_0^{-\gamma}\) yields
the equilibrium cutoff \[
q_0 = (\tau m \Lambda^\sigma /\pi)^{1/\gamma}.
\] This implies equilibrium average quality \[
\bar q = \Lambda^{1+\sigma/\gamma}(\tau m/\pi)^{1/\gamma}
\] and equilibrium variety \[
m_s = \frac{\pi}{\tau \Lambda^{\sigma}},
\] so that the mass of shared software increases in the reward per user
\(\pi\) and decreases in the sharing cost \(\tau\).

Notably, \(m_s\) is independent of the mass of produced projects \(m\).
Entry increases the pool of potential projects, but it also intensifies
competition for users, reducing the payoff from sharing and raising the
cutoff \(q_0\); with Pareto tails, these two forces exactly offset in
equilibrium.

Substituting equilibrium objects back into developer payoff gives \[
\Pi(q) = \frac{\pi q^\sigma}{m_s \bar q^\sigma} = q^\sigma \Lambda^{-\sigma^2/\gamma} \tau \left(\frac{\pi}{\tau m}\right)^{\sigma/\gamma} = \tau\left(\frac{q}{q_0}\right)^{\sigma}.
\] Developer payoff is increasing in quality \(q\) and in the reward per
user \(\pi\), and decreasing in the mass of competing projects \(m\). It
is also increasing in the sharing cost \(\tau\) because \(\gamma>\sigma\):
higher sharing costs discourage marginal developers from sharing,
reducing competition for those who do share (formally,
\(\Pi(q)\propto \tau^{1-\sigma/\gamma}\)).

Finally, before drawing quality, a developer's expected net payoff from
entering the development stage (gross of the common up-front cost
\(\Phi\)) is \[
\mathbb E[\max\{\Pi(q)-\tau,0\}] = \int_{q=q_0}^\infty [\Pi(q)-\tau]\,\gamma q^{-\gamma-1}\, d q = \frac{\sigma}{\gamma - \sigma} \cdot \frac{\pi}{m \Lambda^\sigma}.
\] Low-quality draws \(q<q_0\) are not shared and therefore earn zero.

Free entry requires that the expected payoff from developing software
equals the up-front cost: \[
\Phi = \frac{\sigma}{\gamma-\sigma}\frac{\pi}{m\Lambda^\sigma}.
\] Rearranging pins down the equilibrium mass of developers: \[
m =\frac{\sigma}{\gamma-\sigma}\frac{\pi}{\Phi\Lambda^\sigma}. 
\] Entry is increasing in the reward per user \(\pi\) (larger market)
and decreasing in the development cost \(\Phi\) (higher barrier). Entry
is also decreasing in the quality aggregator \(\Lambda\): fatter-tailed
quality distributions (lower \(\gamma\), higher \(\Lambda\)) imply that
high-quality projects capture more of the market, making entry less
attractive for marginal developers.

\subsection{Software Production
Function}\label{software-production-function}

The cost of developing software is endogenous. Developers rely on their
own time and effort, but also use existing software tools to build new
ones. When software quality and variety increase, development becomes
cheaper---a feedback we call the \emph{software-begets-software} effect.
Importantly, humans remain essential: they design software architecture,
make product decisions, and oversee implementation, even when AI agents
write substantial portions of the code. Survey evidence confirms that
experienced developers use AI agents as productivity tools while
retaining control over design and implementation
\citep{huang-2025-control}.

We model software production with a Cobb-Douglas technology: elasticity
\(\beta \in (0,1)\) to existing software and \(1-\beta\) to labor.
Software can substitute for human effort, but not completely. The parameter \(\beta\)
governs the strength of the software-begets-software feedback.

Developers value
software quality and variety in the same way as final users---their
productive value from the software ecosystem is \[
\mathcal U = \bar q u m_s^{1/\sigma} = u\Lambda^{\sigma/\gamma}m^{1/\gamma}{\pi^{1/\sigma-1/\gamma}}{\tau^{1/\gamma-1/\sigma}}.
\] This symmetry is natural: developers are themselves users of existing
packages when building new software. The same functional form also fits the final-user side of the market. In practice, the ``consumer'' in the model is often a programmer assembling a custom, one-off solution from existing OSS packages (for example, a script, a data pipeline, or an internal tool). We model this assembly problem in reduced form with the same aggregator $\mathcal U = \bar q\,u\,m_s^{1/\sigma}$ because what matters for both usage and production is access to high-quality packages, a productivity shifter $u$, and variety $m_s$. Final users, however, do not contribute new OSS varieties: they take $(m_s,\bar q)$ as given and choose which package to use and whether to use it directly or via vibe coding. We also normalize away an explicit labor choice for final users by treating their time/effort constraint as a scale parameter; this affects the level of utility but not the adoption decision that pins down the vibe-coding share $v$.

The software development cost is
therefore \[
\Phi = \kappa^{1-\beta}\mathcal U^{-\beta} = 
\kappa^{1-\beta}
u^{-\beta}\Lambda^{-\beta\sigma/\gamma}m^{-\beta/\gamma}
\left(\frac\tau{\pi}\right)^{\beta/\sigma-\beta/\gamma}.
\] Development cost is increasing in labor cost \(\kappa\) and
decreasing in software quality \(\bar q\), productivity \(u\), and
variety \(m_s\). This expression follows from cost minimization under
Cobb-Douglas: minimizing \(\kappa L + p_S S\) subject to
\(L^{1-\beta}S^\beta = 1\) yields unit cost proportional to
\(\kappa^{1-\beta}p_S^\beta\); interpreting the ``price'\,' of software
as the inverse of its productivity \(\mathcal{U}^{-1}\) gives the stated
form.

The productivity shifter \(u\) will capture the effect of vibe coding on
development: AI assistance makes any given package easier to use and
faster to integrate. Because \(u\) enters with exponent \(-\beta\), the
cost reduction from higher \(u\) is proportional to the software share
\(\beta\). When software accounts for a larger share of production
(\(\beta\) higher), improvements in software productivity translate into
larger cost reductions.

\begin{assumption}[Parameter Restrictions]\label{ass:parameters}
Quality dispersion exceeds variety substitutability:
$$
\gamma > \sigma.
$$
\end{assumption}

\subsection{Equilibrium}\label{equilibrium}

We now define equilibrium in the baseline economy (without vibe coding,
so \(u\) is a fixed productivity parameter and all users interact
directly with OSS).

\begin{definition}[Baseline Equilibrium]\label{def:baseline-eq}
A \emph{baseline equilibrium} is a tuple $(m, q_0, m_s, \bar q)$ with $m > 0$, $q_0 \ge 1$, $m_s > 0$, and $\bar q > 0$ such that:
\begin{enumerate}
\item \textbf{(User optimality)} Each user chooses package $k$ to maximize $U_{ik} = \nu_{ik} q_k u$, implying adoption probability $\Pr(k) = q_k^\sigma / (m_s \bar q^\sigma)$.
\item \textbf{(Sharing cutoff)} A developer shares if and only if $\Pi(q) \ge \tau$, where $\Pi(q) = \pi q^\sigma / (m_s \bar q^\sigma)$. The marginal project satisfies $\Pi(q_0) = \tau$.
\item \textbf{(Quality aggregation)} Average quality among shared projects is $\bar q = \Lambda q_0$ with $\Lambda = \left(\frac{\gamma}{\gamma - \sigma}\right)^{1/\sigma}$.
\item \textbf{(Mass of shared software)} Given mass $m$ of developed projects, $m_s = m q_0^{-\gamma}$.
\item \textbf{(Free entry)} Developers enter until expected payoff equals development cost:
$$
\frac{\sigma}{\gamma - \sigma} \cdot \frac{\pi}{m \Lambda^\sigma} = \Phi,
$$
where $\Phi = \kappa^{1-\beta}u^{-\beta}\Lambda^{-\beta\sigma/\gamma}m^{-\beta/\gamma}\left(\frac\tau{\pi}\right)^{\beta/\sigma-\beta/\gamma}$.
\end{enumerate}
\end{definition}

Conditions (2)--(4) jointly determine \(q_0\), \(\bar q\), and \(m_s\)
as functions of \(m\). Condition (5) then pins down the equilibrium mass
of developers \(m\).

The software-begets-software feedback creates a magnification. When
entry increases, quality rises, which lowers development costs, which
encourages more entry.

\begin{proposition}[Equilibrium Existence and Uniqueness]\label{prop:existence}
Under Assumption~\ref{ass:parameters}, a unique baseline equilibrium exists. The equilibrium is characterized by:
\begin{align}
q_0 &= \left(\frac{\tau m \Lambda^\sigma}{\pi}\right)^{1/\gamma}, \label{eq:q0-eq}\\
\bar{q} &= \Lambda q_0 = \Lambda^{1+\sigma/\gamma}\left(\frac{\tau m}{\pi}\right)^{1/\gamma}, \label{eq:qbar-eq}\\
m_s &= \frac{\pi}{\tau \Lambda^\sigma}, \label{eq:ms-eq}
\end{align}
where $m$ solves the reduced-form free entry condition
\begin{equation}
m^{1 - \beta/\gamma} = \frac{\sigma}{\gamma - \sigma} \cdot \Lambda^{\beta\sigma/\gamma - \sigma} \cdot \kappa^{\beta-1} \cdot u^{\beta} \cdot \pi^{1 + \beta/\sigma - \beta/\gamma} \cdot \tau^{\beta/\gamma - \beta/\sigma}. \label{eq:m-reduced}
\end{equation}
\end{proposition}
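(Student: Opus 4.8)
The plan follows the roadmap already sketched after Definition~\ref{def:baseline-eq}: treat conditions (2)--(4) as a system that determines $(q_0,\bar q,m_s)$ as functions of the developer mass $m$, and then use the free-entry condition (5) to pin down $m$. First I would eliminate $\bar q$ and $m_s$ using conditions (3) and (4), so that the marginal-sharing condition $\pi q_0^\sigma/(m_s\bar q^\sigma)=\tau$ becomes a single power equation in $q_0$. The exponents of $q_0$ collapse: since $m_s=mq_0^{-\gamma}$ and $\bar q^\sigma=\Lambda^\sigma q_0^\sigma$, we get $m_s\bar q^\sigma=m\Lambda^\sigma q_0^{\sigma-\gamma}$, so the condition reads $\pi q_0^\gamma/(m\Lambda^\sigma)=\tau$, which is \eqref{eq:q0-eq}. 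Back-substitution into (3) and (4) yields \eqref{eq:qbar-eq} and \eqref{eq:ms-eq} (with $m_s$ turning out independent of $m$). Because each step is a substitution rather than a genuine fixed point, the triple $(q_0,\bar q,m_s)$ is uniquely pinned down by $m$. Assumption~\ref{ass:parameters} is already in play here: it makes $\Lambda=(\gamma/(\gamma-\sigma))^{1/\sigma}$ real and greater than one, and the same moment bound is what makes the truncated-Pareto average $\bar q=\Lambda q_0$ finite.

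Next I would substitute these closed forms into the free-entry condition (5). Its left side, $\frac{\sigma}{\gamma-\sigma}\cdot\frac{\pi}{m\Lambda^\sigma}$, is the expected net sharing payoff $\int_{q_0}^\infty[\Pi(q)-\tau]\,\gamma q^{-\gamma-1}\,dq$ with $\Pi(q)=\tau(q/q_0)^\sigma$; this integral converges and equals that expression precisely because $\gamma>\sigma$. Its right side is the development cost $\Phi$, which---after inserting \eqref{eq:qbar-eq}--\eqref{eq:ms-eq} into $\Phi=\kappa^{1-\beta}(\bar q\,u\,m_s^{1/\sigma})^{-\beta}$---is exactly the $m$-dependent expression recorded in condition (5). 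Multiplying the equation through by $m$ moves all of the $m$-dependence to the left as $m^{1-\beta/\gamma}$, and collecting the remaining powers of $\Lambda$, $\pi$, $\tau$, $\kappa$, $u$ on the right gives \eqref{eq:m-reduced}.

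It then remains to solve \eqref{eq:m-reduced} for $m$. The exponent $1-\beta/\gamma$ lies in $(0,1)$, since $\beta\in(0,1)$ and $\gamma>\sigma>1$, so $m\mapsto m^{1-\beta/\gamma}$ is a continuous, strictly increasing bijection of $(0,\infty)$ onto itself. The right-hand side of \eqref{eq:m-reduced} is a finite, strictly positive constant, every factor being positive (with $\gamma>\sigma$ again needed, now for $\sigma/(\gamma-\sigma)>0$). Hence there is a unique $m>0$, namely the right-hand side raised to the power $\gamma/(\gamma-\beta)$, and with it a unique tuple $(m,q_0,m_s,\bar q)$ through the formulas above. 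Condition (1) holds automatically: for any $(m_s,\bar q)$ the stated adoption probabilities are exactly those induced by maximizing $U_{ik}=\nu_{ik}q_k u$ over the Fréchet shocks in \eqref{eq:outer-frechet}.

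The computations are routine---the Pareto, Fréchet, and Cobb--Douglas structure was chosen precisely so that the apparent simultaneity in (2)--(4) collapses to one log-linear equation---so there is no substantive analytical obstacle. The one point requiring care is the boundary condition $q_0\ge 1$ built into Definition~\ref{def:baseline-eq}: by \eqref{eq:q0-eq} this amounts to $\tau m\Lambda^\sigma\ge\pi$ at the solved value of $m$, a joint restriction on primitives that is not implied by Assumption~\ref{ass:parameters}. In the write-up I would either maintain it as an explicit parameter condition (guaranteeing an interior sharing cutoff, so that some but not all drawn projects are shared) or note separately that when it fails the economy sits at the corner $q_0=1$ with universal sharing, for which a parallel and simpler free-entry argument applies.
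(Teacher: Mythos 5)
Your proposal is correct and follows essentially the same route as the paper's own proof: substitute conditions (2)--(4) to obtain the closed forms for $(q_0,\bar q,m_s)$ given $m$, then insert them into the free-entry condition and use the strict monotonicity of $m\mapsto m^{1-\beta/\gamma}$ to get a unique $m>0$. Your closing observation that the interiority condition $q_0\ge 1$ requires an extra restriction on primitives (namely $\tau m\Lambda^\sigma\ge\pi$ at the solved $m$) is a legitimate refinement that the paper's proof silently omits.
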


\begin{proof}
See Appendix~\ref{app:proofs}.
\end{proof}

Define \(\eta \equiv 1 - \beta/\gamma > 0\). Taking logs of
\eqref{eq:m-reduced} and differentiating:

\begin{corollary}[Comparative Statics]\label{cor:comp-statics}
Equilibrium entry $m$ satisfies:
\begin{align}
\frac{\partial \log m}{\partial \log \pi} &= \frac{1 + \beta/\sigma - \beta/\gamma}{\eta} > 1, \label{eq:dm-dpi}\\
\frac{\partial \log m}{\partial \log u} &= \frac{\beta}{\eta} > 0, \label{eq:dm-du}\\
\frac{\partial \log m}{\partial \log \kappa} &= -\frac{1-\beta}{\eta} < 0, \label{eq:dm-dkappa}\\
\frac{\partial \log m}{\partial \log \tau} &= \frac{\beta/\gamma - \beta/\sigma}{\eta}. \label{eq:dm-dtau}
\end{align}
\end{corollary}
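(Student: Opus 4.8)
The plan is to read off all four comparative statics directly from the reduced-form free-entry condition~\eqref{eq:m-reduced} by taking logarithms, and then to sign the resulting expressions using Assumption~\ref{ass:parameters} together with $\beta\in(0,1)$. Note first that~\eqref{eq:m-reduced} already expresses $m$ in closed form: raising both sides to the power $1/\eta$ with $\eta=1-\beta/\gamma$ gives $m$ explicitly as a function of the parameters, so the corollary is a one-line consequence once the exponent $\eta$ is shown to be positive.

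Concretely, I would take logs of~\eqref{eq:m-reduced} to obtain
\[
\eta\,\log m = C(\gamma,\sigma) + (\beta-1)\log\kappa + \beta\log u + \Bigl(1+\tfrac{\beta}{\sigma}-\tfrac{\beta}{\gamma}\Bigr)\log\pi + \Bigl(\tfrac{\beta}{\gamma}-\tfrac{\beta}{\sigma}\Bigr)\log\tau,
\]
where $C(\gamma,\sigma)=\log\frac{\sigma}{\gamma-\sigma}+\bigl(\tfrac{\beta\sigma}{\gamma}-\sigma\bigr)\log\Lambda$ collects the terms that are constant in $(\kappa,u,\pi,\tau)$, using that $\Lambda=(\gamma/(\gamma-\sigma))^{1/\sigma}$ depends only on $(\gamma,\sigma)$. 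Because $\gamma>\sigma>1>\beta$ (the first inequality is Assumption~\ref{ass:parameters}, the last two are maintained throughout), we have $\eta=1-\beta/\gamma>0$, which is exactly the condition used in Proposition~\ref{prop:existence}; dividing by $\eta$ and differentiating term by term yields~\eqref{eq:dm-dpi}--\eqref{eq:dm-dtau} immediately.

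It then remains to check the signs. That $\eta>0$ was just established. For~\eqref{eq:dm-du}, $\beta/\eta>0$ is immediate; for~\eqref{eq:dm-dkappa}, $-(1-\beta)/\eta<0$ because $\beta\in(0,1)$. For~\eqref{eq:dm-dpi} I would show the numerator strictly exceeds $\eta$: $\bigl(1+\tfrac{\beta}{\sigma}-\tfrac{\beta}{\gamma}\bigr)-\eta=\tfrac{\beta}{\sigma}>0$, so dividing by $\eta>0$ gives $\partial\log m/\partial\log\pi>1$. This is worth flagging as the one economically substantive line: the factor $1/\eta>1$ is the software-begets-software magnification already baked into~\eqref{eq:m-reduced}, so the general-equilibrium elasticity strictly exceeds the partial-equilibrium value $1$ visible in $m_s=\pi/(\tau\Lambda^\sigma)$ at fixed $\Phi$. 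Finally, although the corollary only records the raw expression in~\eqref{eq:dm-dtau}, one can add that $\tfrac{\beta}{\gamma}-\tfrac{\beta}{\sigma}=\beta\bigl(\tfrac1\gamma-\tfrac1\sigma\bigr)<0$ under Assumption~\ref{ass:parameters}, so entry is decreasing in the sharing cost.

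There is no real obstacle here; the only points requiring care are (i) confirming $\eta>0$ so that the division is valid and the derivatives finite, and (ii) keeping $\Lambda$ and the prefactor $\sigma/(\gamma-\sigma)$ fixed when differentiating in $\pi,u,\kappa,\tau$, since none of these four parameters enters those objects. Everything else is differentiation of an affine function of the log-parameters.
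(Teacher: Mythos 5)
Your proof is correct and follows essentially the same route as the paper, which obtains the corollary by ``taking logs of \eqref{eq:m-reduced} and differentiating''; your sign checks ($\eta>0$ from $\beta<1<\sigma<\gamma$, and the numerator of \eqref{eq:dm-dpi} exceeding $\eta$ by exactly $\beta/\sigma$) are precisely the needed steps. One side note: your added observation that $\beta/\gamma-\beta/\sigma<0$, so that entry is \emph{decreasing} in $\tau$, is algebraically right, even though it conflicts with the paper's own prose claim that ``entry rises with $\tau$'' --- the corollary itself leaves that sign unstated, so this does not affect the validity of your argument.
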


The first elasticity exceeds one:
\(1 + \beta/\sigma - \beta/\gamma > 1 - \beta/\gamma\) because
\(\beta/\sigma > 0\). This reflects the software-begets-software
feedback: higher \(\pi\) raises entry, which improves quality and
variety, which lowers costs and further encourages entry. The second
elasticity \(\beta/\eta\) is positive; since \(\eta < 1\), we have
\(\beta/\eta > \beta\), amplifying the direct productivity effect.

Entry increases in developer reward \(\pi\) and software productivity
\(u\), and decreases in labor cost \(\kappa\). The effect of sharing
cost \(\tau\) depends on parameters. The intuition for each effect:

\begin{itemize}
\item \emph{Reward ($\pi$)}: Higher reward per user makes sharing more attractive, drawing in marginal developers. The elasticity exceeds one because of the software-begets-software feedback: more entry improves the ecosystem, lowering costs and inducing further entry.
\item \emph{Productivity ($u$)}: Higher software productivity lowers development costs, encouraging entry. The elasticity $\beta/\eta > \beta$ reflects amplification through the feedback loop.
\item \emph{Labor cost ($\kappa$)}: Higher wages raise development costs, discouraging entry. The magnitude is smaller when software accounts for a larger share of production.
\item \emph{Sharing cost ($\tau$)}: Higher $\tau$ discourages sharing (reducing $m_s$), but it also raises the quality cutoff (improving $\bar{q}$). Because $\gamma > \sigma$, the quality effect dominates and entry rises with $\tau$.
\end{itemize}

\begin{corollary}[Equilibrium Quality and Welfare]\label{cor:quality-welfare}
Average quality and user welfare in equilibrium are:
\begin{align}
\bar{q} &= \Lambda^{1+\sigma/\gamma} \left(\frac{\tau}{\pi}\right)^{1/\gamma} m^{1/\gamma}, \label{eq:qbar-m}\\
\mathcal{U} &= \bar{q} \, u \, m_s^{1/\sigma} = \Lambda^{\sigma/\gamma} \left(\frac{\pi}{\tau}\right)^{1/\sigma - 1/\gamma} u \, m^{1/\gamma}. \label{eq:U-m}
\end{align}
Both are increasing in $m$. Hence any parameter change that raises entry also raises average quality and user welfare.
\end{corollary}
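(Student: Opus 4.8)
The plan is to obtain both displays by direct substitution of the closed-form equilibrium objects from Proposition~\ref{prop:existence} into the definitions of $\bar q$ and $\mathcal U$, and then to read off monotonicity from the resulting power-law expressions; no optimization or fixed-point argument is needed beyond what Proposition~\ref{prop:existence} already supplies.

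First, \eqref{eq:qbar-m} is just \eqref{eq:qbar-eq} with the $m$-dependence pulled out: starting from the quality-aggregation condition $\bar q=\Lambda q_0$ and inserting $q_0=(\tau m\Lambda^{\sigma}/\pi)^{1/\gamma}$ from \eqref{eq:q0-eq}, extract $\Lambda^{\sigma/\gamma}$ from the bracket, combine with the leading $\Lambda$, and split $(\tau m/\pi)^{1/\gamma}=(\tau/\pi)^{1/\gamma}m^{1/\gamma}$, giving $\bar q=\Lambda^{1+\sigma/\gamma}(\tau/\pi)^{1/\gamma}m^{1/\gamma}$. Next, for \eqref{eq:U-m}, write $\mathcal U=\bar q\,u\,m_s^{1/\sigma}$, substitute the expression for $\bar q$ just obtained together with $m_s=\pi/(\tau\Lambda^{\sigma})$ from \eqref{eq:ms-eq}, so that $m_s^{1/\sigma}=(\pi/\tau)^{1/\sigma}\Lambda^{-1}$. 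Collecting the $\Lambda$ powers, $\Lambda^{1+\sigma/\gamma}\cdot\Lambda^{-1}=\Lambda^{\sigma/\gamma}$, and the $\pi/\tau$ powers, $(\tau/\pi)^{1/\gamma}(\pi/\tau)^{1/\sigma}=(\pi/\tau)^{1/\sigma-1/\gamma}$, yields $\mathcal U=\Lambda^{\sigma/\gamma}(\pi/\tau)^{1/\sigma-1/\gamma}\,u\,m^{1/\gamma}$. This step is pure bookkeeping of exponents.

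For monotonicity, I would observe that both \eqref{eq:qbar-m} and \eqref{eq:U-m} have the form (a constant in the primitives) $\times\,m^{1/\gamma}$, and the prefactors $\Lambda^{1+\sigma/\gamma}(\tau/\pi)^{1/\gamma}$ and $\Lambda^{\sigma/\gamma}(\pi/\tau)^{1/\sigma-1/\gamma}u$ are strictly positive since $\Lambda>1$ and $u,\pi,\tau>0$ (Assumption~\ref{ass:parameters} additionally makes the exponent $1/\sigma-1/\gamma>0$, but only positivity of the base is needed). Since $1/\gamma>0$, each expression is strictly increasing in $m$, holding the primitives fixed. The final sentence then follows by composing with Corollary~\ref{cor:comp-statics}: $\kappa$ and $\beta$ do not enter the prefactors, so a shock operating through labor cost or the software share moves $\bar q$ and $\mathcal U$ with exactly the sign of its effect on $m$; for shocks to $u$ or $\pi$ one combines the signed $\partial\log m$ from Corollary~\ref{cor:comp-statics} with the direct dependence in \eqref{eq:qbar-m}--\eqref{eq:U-m}, which is absent (for $u$ in $\bar q$) or same-signed.

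I do not expect a genuine mathematical obstacle here; the step requiring the most care is making the ``hence'' statement precise, because $\pi$, $\tau$, and $u$ appear in \eqref{eq:qbar-m}--\eqref{eq:U-m} both directly and through $m$, so the cleanly general content is the strict monotonicity in $m$ with primitives held fixed, and the last sentence should be read through that lens (equivalently, as applying to shocks---such as to $\kappa$---that do not enter the reduced-form coefficients). If a fully unconditional version is wanted, the only extra work is substituting $m$ from \eqref{eq:m-reduced} into \eqref{eq:qbar-m}--\eqref{eq:U-m} and checking the signs of the reduced exponents, which is again routine.
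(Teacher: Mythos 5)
Your derivation is correct and is exactly the intended argument: the paper states this corollary without a separate proof because it follows by substituting \eqref{eq:q0-eq} and \eqref{eq:ms-eq} into $\bar q=\Lambda q_0$ and $\mathcal U=\bar q\,u\,m_s^{1/\sigma}$, which is precisely your exponent bookkeeping, and the monotonicity in $m$ follows from the positive prefactor and exponent $1/\gamma>0$. Your caveat about the ``hence'' clause is also well taken and matches how the paper itself handles it --- the text immediately after the corollary substitutes $m$ from \eqref{eq:m-reduced} to obtain fully reduced-form exponents in $u$ and $\kappa$, exactly the extra step you identify as needed for parameters (like $\pi$ and $\tau$) that also appear directly in the prefactors.
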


Define the \emph{welfare exponent} $\omega \equiv 1/(\gamma\eta) = 1/(\gamma - \beta)$. Substituting the equilibrium value of $m$ from \eqref{eq:m-reduced}, both $\bar{q}$ and $\mathcal{U}$ can be expressed purely in terms of exogenous parameters. In particular, both are increasing in software productivity $u$ (with exponent $\beta\omega$ for $\bar{q}$ and $1 + \beta\omega$ for $\mathcal{U}$) and decreasing in labor cost $\kappa$ (with exponent $-(1-\beta)\omega$).

The share of user utility captured by developers is the ratio of development cost to welfare. Empirically, this ratio is approximately 0.001---development costs are about 0.1\% of user utility. The small ratio reflects two forces: high productivity raises user welfare faster than development costs, and competition among developers dissipates rents. Users receive roughly 1000 times more value than developers spend creating that value---a consequence of software's nonrivalry and the scale of adoption.

\subsubsection{First-Best Allocation}\label{first-best-allocation}

The comparative statics show that policy can affect entry and welfare.
But is the market providing the right amount of open source software? To
answer this, we characterize the first-best allocation and compare it to
equilibrium.

A social planner maximizes aggregate welfare, defined as user utility
minus total development costs. Developer surplus is zero under free entry, so aggregate welfare reduces to user utility minus development costs.

\begin{definition}[First-Best]\label{def:first-best}
The \emph{first-best allocation} solves
$$
\max_{m \ge 0} \; \mathcal{U}(m) - m \cdot \Phi(m),
$$
where $\mathcal{U}(m)$ and $\Phi(m)$ are given by the equilibrium expressions \eqref{eq:U-m} and the development cost evaluated at equilibrium objects. 
\end{definition}

\begin{theorem}[Underprovision of Entry]\label{thm:underprovision-baseline}
Suppose developers capture less than the full social value of a user, i.e., $\pi < \bar{q} u$. Then equilibrium entry is below the first-best level: $m^{eq} < m^{fb}$. Equivalently, there are too few shared software packages in equilibrium.
\end{theorem}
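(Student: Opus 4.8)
The plan is to reduce the statement to the single inequality \(W'(m^{eq})>0\), where \(W(m)\equiv\mathcal{U}(m)-m\Phi(m)\) is the planner's objective in Definition~\ref{def:first-best}, and then to read off \(m^{eq}<m^{fb}\) from single-peakedness of \(W\). Substituting the closed forms already in hand: by \eqref{eq:U-m}, \(\mathcal{U}(m)=A\,m^{1/\gamma}\) for a positive constant \(A\) collecting \(\Lambda,\pi,\tau,u\); and since \(\Phi=\kappa^{1-\beta}\mathcal{U}^{-\beta}\) evaluated at equilibrium objects, \(m\Phi(m)=B\,m^{1-\beta/\gamma}\) with \(B=\kappa^{1-\beta}A^{-\beta}>0\). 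Hence \(W(m)=A\,m^{1/\gamma}-B\,m^{1-\beta/\gamma}\), a difference of two power functions. Under Assumption~\ref{ass:parameters}, \(1/\gamma<1\); in the nondegenerate case \(\gamma-\beta>1\) the exponent \(1-\beta/\gamma\) exceeds \(1/\gamma\), so \(W'(m)=\frac{A}{\gamma}m^{1/\gamma-1}-B(1-\beta/\gamma)\,m^{-\beta/\gamma}\) is positive for small \(m\), vanishes at exactly one point (its bracketed factor \(\frac{A}{\gamma}m^{(1+\beta)/\gamma-1}-B(1-\beta/\gamma)\) is strictly monotone in \(m\)), and is negative for large \(m\). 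Thus \(W\) has a unique interior maximizer \(m^{fb}\) with \(W'>0\) on \((0,m^{fb})\), so it suffices to show \(W'(m^{eq})>0\). (If instead \(\gamma-\beta\le 1\), \(W\) is eventually increasing with no finite maximizer and the claim is vacuous; I would dispose of this boundary case first.)

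Next I would evaluate \(W'\) at the equilibrium. From \(\mathcal{U}'(m)=\mathcal{U}(m)/(\gamma m)\) and \(\Phi(m)=\kappa^{1-\beta}\mathcal{U}(m)^{-\beta}\) one obtains the compact form \(W'(m)=\frac{\mathcal{U}(m)}{\gamma m}-\bigl(1-\frac{\beta}{\gamma}\bigr)\Phi(m)\), in which the term \(-\frac{\beta}{\gamma}\Phi(m)\) is precisely the software-begets-software cost reduction that an additional entrant confers on incumbents and does not internalize. Now impose the free-entry condition of Definition~\ref{def:baseline-eq}(5), \(\Phi(m^{eq})=\frac{\sigma}{\gamma-\sigma}\cdot\frac{\pi}{m^{eq}\Lambda^{\sigma}}\), and simplify using \(\Lambda^{\sigma}=\gamma/(\gamma-\sigma)\) to get \(\Phi(m^{eq})=\sigma\pi/(\gamma m^{eq})\). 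Substituting yields \(W'(m^{eq})=\frac{1}{\gamma m^{eq}}\bigl[\mathcal{U}^{eq}-(1-\beta/\gamma)\,\sigma\pi\bigr]\), so underprovision is equivalent to the scalar inequality \(\mathcal{U}^{eq}>(1-\beta/\gamma)\,\sigma\pi\): equilibrium user welfare must exceed the privately appropriated reward \(\pi\) scaled by the constant \((1-\beta/\gamma)\sigma\).

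Finally I would invoke the hypothesis. Since \(\mathcal{U}^{eq}=\bar q^{eq}u\,(m_s^{eq})^{1/\sigma}\) (the equilibrium value of \(\mathcal{U}=\bar q u\, m_s^{1/\sigma}\)), the assumption \(\pi<\bar q^{eq}u\) gives \(\mathcal{U}^{eq}>\pi\,(m_s^{eq})^{1/\sigma}\); because \(1-\beta/\gamma<1\), this already delivers \(\mathcal{U}^{eq}>(1-\beta/\gamma)\sigma\pi\) once the equilibrium variety satisfies \((m_s^{eq})^{1/\sigma}\ge(1-\beta/\gamma)\sigma\), which holds with enormous slack in any realistic calibration (the paper's target \(\mathcal{U}^{eq}/(m^{eq}\Phi(m^{eq}))\approx 10^{3}\) forces \(\mathcal{U}^{eq}\approx 10^{3}\sigma\pi/\gamma\), far above \((1-\beta/\gamma)\sigma\pi\)). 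Then \(W'(m^{eq})>0\) and, by the single-peakedness from step one, \(m^{eq}<m^{fb}\). I expect the main obstacle to be exactly this last link: the transparent economic hypothesis \(\pi<\bar q u\) is a per-package comparison, whereas the calculus produces the slightly sharper \(\mathcal{U}^{eq}>(1-\beta/\gamma)\sigma\pi\), so the argument must either lean on the variety factor \((m_s^{eq})^{1/\sigma}\) being at least \((1-\beta/\gamma)\sigma\) (true under the paper's parameters but not implied by Assumption~\ref{ass:parameters} alone) or restate the hypothesis directly as \(\mathcal{U}^{eq}>(1-\beta/\gamma)\sigma\pi\). Confirming single-peakedness of \(W\) and excising the \(\gamma\le 1+\beta\) regime is a secondary, essentially one-line, obstacle.
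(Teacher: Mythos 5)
Your proposal follows the same planner-versus-free-entry comparison that the paper sketches, but it actually executes the computation, and in doing so it is more informative than the paper's own proof. Your reduction is correct at every step: $\mathcal{U}(m)\propto m^{1/\gamma}$ and $m\Phi(m)\propto m^{1-\beta/\gamma}$, the simplification $\Lambda^{\sigma}=\gamma/(\gamma-\sigma)$ gives $\Phi(m^{eq})=\sigma\pi/(\gamma m^{eq})$, and hence $W'(m^{eq})>0$ if and only if $\mathcal{U}^{eq}>(1-\beta/\gamma)\sigma\pi$. The paper's appendix proof never writes down this first-order condition; it asserts that because $\pi<\bar q u$ and because $m^{eq}$ is increasing in $\pi$, entry must be below first best. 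That argument implicitly assumes that an equilibrium with full appropriation would coincide with (or exceed) the planner's optimum, which is exactly the step your calculation shows is not free. So the two ``obstacles'' you flag are genuine features of the theorem as stated, not defects of your argument: (i) the hypothesis $\pi<\bar q u$ delivers $\mathcal{U}^{eq}>(1-\beta/\gamma)\sigma\pi$ only together with an auxiliary condition such as $(m_s^{eq})^{1/\sigma}\ge(1-\beta/\gamma)\sigma$ (true under the calibration, not implied by Assumption~\ref{ass:parameters}); and (ii) an interior first best requires $\gamma-\beta>1$, which is likewise unstated (the paper's welfare exponent $1/(\gamma-\beta)$ presumes $\gamma>\beta$ but not the stronger condition needed for $W$ to be single-peaked). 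Your treatment of the boundary case and the monotonicity of the bracketed factor in $W'$ are both correct.

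In short, your proof is the rigorous completion of the paper's heuristic, and the residual gap you identify is a gap in Theorem~\ref{thm:underprovision-baseline} itself: as written, the result holds under the stated hypothesis plus a mild lower bound on equilibrium variety (or, more cleanly, under the restated hypothesis $\mathcal{U}^{eq}>(1-\beta/\gamma)\sigma\pi$, which is the exact content of the underprovision claim). Flagging that the sufficient condition should be restated, rather than silently asserting the implication as the paper does, is the right call.
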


\begin{proof}
See Appendix~\ref{app:proofs}.
\end{proof}

The intuition is straightforward: the social marginal benefit of entry
includes the full contribution to \(\mathcal{U}\), while the private
marginal benefit is only \(\pi\) per user. When \(\pi < \bar{q}u\), the
private return understates the social return, so developers enter less
than is socially optimal.

The condition \(\pi < \bar{q}u\) is the natural case: developers capture
a share of user surplus through reputation, donations, or monetization,
but not the full amount. The theorem implies that policies raising
developer appropriability (higher \(\pi\)) or subsidizing entry (lower
\(\Phi\)) move the economy toward the first-best. Real-world examples include GitHub Sponsors (raising \(\pi\)) and foundation grants for infrastructure projects (lowering \(\Phi\)) \citep{boysel-nagle-carter-hermansen-crosby-luszcz-lincoln-yue-hoffmann-staub-2024-oss-funding}.

These baseline results set the stage for analyzing vibe coding. The key
question is how AI-assisted development---which changes both \(u\) and
\(\pi\)---affects the equilibrium provision of open source software.

\subsection{Vibe Coding}\label{vibe-coding}

We now extend the baseline model to incorporate vibe coding. The
extension adds an inner nest to the user's problem: after choosing a
package, the user decides how to use it. This choice affects both user
utility and developer revenue, because vibe-coded usage does not
generate the direct engagement that developers monetize.

Once a user has chosen a particular package \(k\), she can choose
between two usage modes:

\begin{enumerate}
\item \emph{Direct usage}: the user engages directly with the open source project---reading documentation, visiting the project website, potentially discovering the developer's commercial offerings
\item \emph{Vibe-coded usage}: the user interacts with the software through an AI assistant that has been trained on the software's documentation and codebase
\end{enumerate}

We normalize the productivity of direct usage of a software package of quality $q$ to $q$. Vibe-coded usage has
productivity \(q \zeta \geq 0\), which captures how well AI assistants can
help users accomplish their goals. When \(\zeta = 0\), vibe coding is
not available; when \(\zeta = 1\), vibe coding is equally productive as
direct usage; when \(\zeta > 1\), vibe coding is more productive. Both
options are subject to idiosyncratic taste shocks that capture
heterogeneity in user preferences over interaction modes:
\begin{equation}
u_k = \max\{q \varepsilon_{k1}, \zeta q \varepsilon_{k2}\}, \qquad \varepsilon_{ki} \stackrel{\text{i.i.d.}}{\sim} \text{Fr\'{e}chet}\left(\theta, 1/\Gamma(1-1/\theta)\right), \label{eq:inner-frechet}
\end{equation} where \(\theta > 1\) is the elasticity of substitution
between direct and vibe-coded usage. A higher \(\theta\) means the two
modes are closer substitutes, so users are more responsive to
differences in productivity.

The choice probability follows from the same discrete-choice logic as
the selection of a software package. Since both utilities scale with package quality \(q\),
quality cancels in the choice probability. The probability that a user
chooses vibe coding is: \[
v = \frac{\zeta^\theta}{1+\zeta^\theta}.
\] This logistic function has an intuitive interpretation. When
\(\zeta < 1\), vibe coding is less productive than direct usage, and
fewer than half of users adopt it. When \(\zeta = 1\), users are
indifferent on average, and exactly half choose vibe coding. As
\(\zeta\) rises above 1, vibe coding becomes more attractive, and
adoption rises rapidly. Figure \ref{fig:vibe-adoption} illustrates this
S-curve for \(\theta = 3.5\).


\begin{figure}[htbp]
\centering
\begin{tikzpicture}[scale=1.2]
    \draw[->] (0,0) -- (4.5,0) node[right] {$\zeta$};
    \draw[->] (0,0) -- (0,3.5) node[above] {$v$};
    
    \node[below] at (2,0) {$1$};
    \node[left] at (0,2.5) {$1$};
    \node[left] at (0,1.25) {$\frac{1}{2}$};
    
    \draw (2,-0.05) -- (2,0.05);
    \draw (-0.05,2.5) -- (0.05,2.5);
    \draw (-0.05,1.25) -- (0.05,1.25);
    
    \draw[dashed, gray] (2,0) -- (2,2.5);  
    \draw[dashed, gray] (0,2.5) -- (4.2,2.5);  
    \draw[dashed, gray] (0,1.25) -- (4.2,1.25);  
    
    \draw[thick, blue, domain=0:4.2, samples=200] 
        plot (\x, {2.5 * ((\x/2)^3.5) / (1 + (\x/2)^3.5)});
    
    \fill[blue] (2, 1.25) circle (2pt);
    
    \node[right, font=\small] at (3.2, 1.8) {$v = \frac{\zeta^\theta}{1+\zeta^\theta}$};
    \node[right, font=\small] at (3.2, 1.4) {$(\theta = 3.5)$};
    
\end{tikzpicture}
\caption{Vibe coding adoption as a function of AI capability $\zeta$. When $\zeta < 1$, vibe coding is less productive than direct usage and adoption is below 50\%. As $\zeta$ crosses 1, adoption rises rapidly due to the large elasticity parameter $\theta = 3.5$. The steep S-curve illustrates how small improvements in AI capability can trigger rapid shifts in usage patterns.}
\label{fig:vibe-adoption}
\end{figure}
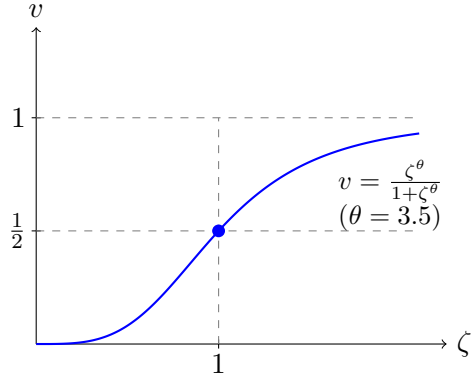

The parameter \(\theta\) governs how sensitive adoption is to changes in
AI capability. When \(\theta\) is large, small improvements in \(\zeta\)
around the threshold \(\zeta = 1\) trigger rapid swings in adoption. The
empirical evidence is consistent with large \(\theta\): AI-assisted
coding went from near-zero to 29 percent of Python functions within two
years \citep{daniotti-2025-ai-code}, and at Anthropic ``70, 80, 90
percent of the code is written by Claude'' \citep{amodei-2025}.

\begin{assumption}[Substitutability Ordering]\label{ass:substitutability}
Usage modes are closer substitutes than software packages:
$$
\theta > \sigma.
$$
\end{assumption}

This assumption is empirically plausible. The parameter \(\sigma\)
governs substitution across software packages---how easily users switch
from one library to another. The parameter
\(\theta\) governs substitution between direct and vibe-coded usage of
the \emph{same} package. Since both modes deliver the same underlying
functionality, they are closer substitutes than entirely different
software packages. Empirically, the rapid adoption of vibe coding---from
near-zero to majority usage within two years---is consistent with high
\(\theta\): users switch easily once AI capability crosses the
threshold. 

The expected utility from a package of quality \(q\), taking the max
over usage modes, is: \begin{equation}
\mathbb{E}[u_k] = q(1 + \zeta^{\theta})^{1/\theta} = q(1-v)^{-1/\theta}. \label{eq:expected-utility-package}
\end{equation} Define \(u \equiv (1-v)^{-1/\theta}\) as the
per-unit-quality utility multiplier from having the vibe coding option.
This multiplier exceeds 1 whenever \(v > 0\): the option to choose
between direct and vibe-coded usage always raises expected utility. The
gain is larger when \(\zeta\) is higher (better AI) or when \(\theta\)
is smaller (modes are more differentiated, so the option value is
larger).

The formula \((1-v)^{-1/\theta}\) parallels expressions in the trade
literature for welfare gains from imported varieties \citep{arkolakis-costinot-rodriguez-clare-2012,halpern-koren-szeidl-2015}: access to an additional option raises welfare by an amount that depends on the usage share and
the elasticity of substitution.

The utility multiplier \(u\) affects both final users and developers.
For users, higher \(u\) raises expected utility directly through
\eqref{eq:expected-utility-package}. For developers, higher \(u\) lowers
development cost \(\Phi\) because software is an input to its own
production. When AI capability \(\zeta\) is high, the vibe coding share
\(v\) is high, \(u = (1-v)^{-1/\theta}\) is large, and developers can
build new software at lower cost. This encourages entry.

Two parameters govern the strength of this feedback. When
\(\theta\) is large, direct and vibe-coded usage are close substitutes, so
the utility gain \((1-v)^{-1/\theta}\) is modest even for high \(v\).
When \(\beta\) is small, software contributes little to its own production, and the cost reduction \(\Phi \propto u^{-\beta}\) is correspondingly weak.

\subsubsection{Short-Run Effect: Developer Adoption
Only}\label{short-run-effect-developer-adoption-only}

In the short run, developers adopt AI assistance before final users
shift to vibe-coded consumption. Users still interact directly with open
source projects---visiting documentation, discovering developers'
commercial offerings, and generating the engagement that underlies
\(\pi\).

\begin{definition}[Short-Run Equilibrium]\label{def:short-run-eq}
A \emph{short-run equilibrium with vibe coding} is a baseline equilibrium (Definition~\ref{def:baseline-eq}) with the following modifications for some $v \in (0,1)$: developers benefit from AI-assisted development, so the productivity parameter in the development cost function is $u = (1-v)^{-1/\theta}$; users still interact directly with OSS, so the utility multiplier in their welfare remains $1$; and the reward per user $\pi$ remains at its baseline value $\bar{\pi}$.
\end{definition}

\begin{theorem}[Short-Run Effect of Vibe Coding]\label{thm:short-run}
The short-run equilibrium (Definition~\ref{def:short-run-eq}) satisfies:
\begin{align}
\frac{m}{m_0} &= (1-v)^{-\beta/(\theta\eta)}>1, \label{eq:m-short}\\
\frac{m_s}{m_{s,0}} &= 1, \label{eq:ms-short}\\
\frac{\bar{q}}{\bar{q}_0} &= (1-v)^{-\beta/(\theta\gamma\eta)}>1, \label{eq:qbar-short}\\
\frac{\mathcal{U}}{\mathcal{U}_0} &= (1-v)^{-\beta/(\theta\gamma\eta)}>1, \label{eq:U-short}
\end{align}
where $\eta = 1 - \beta/\gamma$ and subscript $0$ denotes baseline equilibrium values (computed at $v=0$). This notation should not be confused with the quality cutoff $q_0$, which is an endogenous variable defined by the sharing indifference condition.

All ratios except $m_s/m_{s,0}$ exceed 1 for $v \in (0,1)$. Vibe coding raises entry, quality, and welfare in the short run, though the welfare gain is modest: users benefit only from quality selection, not from AI-assisted consumption.
\end{theorem}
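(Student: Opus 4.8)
The plan is to treat the short-run equilibrium as a baseline equilibrium (Definition~\ref{def:baseline-eq}) in which only the development-cost productivity shifter is perturbed, and then read off every ratio from the closed forms in Proposition~\ref{prop:existence}. The first step is to check that the perturbation is confined to a single equation. Conditions (1)--(4) of Definition~\ref{def:baseline-eq}---user optimality, the sharing cutoff, quality aggregation, and the mass of shared software---involve only $\pi$, $\tau$, $\Lambda$, and $m$, none of which is touched by the move from $v=0$ to $v>0$ in the short run. The only modification is in the free-entry condition (5), where $\Phi$ now carries $u = (1-v)^{-1/\theta}$ in place of the baseline normalization $u_0 = 1$. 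Hence Proposition~\ref{prop:existence} applies verbatim with this one substitution: the short-run equilibrium exists and is unique, and \eqref{eq:q0-eq}--\eqref{eq:m-reduced} hold with $u = (1-v)^{-1/\theta}$.

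The second step handles \eqref{eq:ms-short}: from \eqref{eq:ms-eq}, $m_s = \pi/(\tau\Lambda^\sigma)$ depends only on $\pi$, $\tau$, and $\Lambda$, all unchanged, so $m_s/m_{s,0} = 1$. The third step handles entry: holding every parameter but $u$ fixed in \eqref{eq:m-reduced} gives $m^{\eta} \propto u^{\beta}$, i.e.\ $m \propto u^{\beta/\eta}$. Since $u_0 = (1-0)^{-1/\theta} = 1$ and $u = (1-v)^{-1/\theta}$, this yields $m/m_0 = (1-v)^{-\beta/(\theta\eta)}$; because $v\in(0,1)$ makes $1-v\in(0,1)$ while $\beta,\theta>0$ and $\eta = 1-\beta/\gamma>0$, the exponent is negative and the ratio exceeds $1$, proving \eqref{eq:m-short}. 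Feeding this into \eqref{eq:qbar-eq}, which gives $\bar q \propto m^{1/\gamma}$ with the other parameters fixed, produces $\bar q/\bar q_0 = (m/m_0)^{1/\gamma} = (1-v)^{-\beta/(\theta\gamma\eta)}>1$, proving \eqref{eq:qbar-short}.

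The last step is the welfare ratio \eqref{eq:U-short}, and it is the one place where care is needed. In the short run the user-side utility multiplier is $1$, not $u$, so user welfare is $\mathcal U = \bar q\cdot 1\cdot m_s^{1/\sigma}$ rather than the baseline form $\bar q\, u\, m_s^{1/\sigma}$ that appears in \eqref{eq:U-m}. Since $m_s$ is unchanged, the $m_s^{1/\sigma}$ factor cancels in the ratio and $\mathcal U/\mathcal U_0 = \bar q/\bar q_0 = (1-v)^{-\beta/(\theta\gamma\eta)}>1$. I expect this to be the main pitfall rather than a genuine obstacle: naively reusing \eqref{eq:U-m} with the perturbed $u$ would overstate the welfare gain by a factor $(1-v)^{-1/\theta}$ and obscure the paper's point that in the short run users gain only through quality selection, not through AI-assisted consumption. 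With that distinction respected, the four displayed ratios and their signs follow immediately from the equilibrium characterization.
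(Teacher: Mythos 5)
Your proposal is correct and follows essentially the same route as the paper's own proof: read off each ratio from the closed forms in Proposition~\ref{prop:existence} with $u=(1-v)^{-1/\theta}$ substituted only into the free-entry condition, note that $m_s=\pi/(\tau\Lambda^\sigma)$ is unaffected, and use the fact that the user-side utility multiplier remains $1$ so that $\mathcal{U}/\mathcal{U}_0=\bar q/\bar q_0$. The pitfall you flag about not reusing \eqref{eq:U-m} with the perturbed $u$ is exactly the point the paper makes parenthetically in its proof, so nothing is missing.
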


\begin{proof}
From \eqref{eq:m-reduced}, $m^\eta \propto u^{\beta}$ (holding $\pi$ fixed). Taking logs and differentiating: $\partial \log m / \partial \log u = \beta/\eta$. Substituting $u = (1-v)^{-1/\theta}$ for developer productivity gives \eqref{eq:m-short}. Equation \eqref{eq:ms-eq} shows $m_s = \pi/(\tau\Lambda^\sigma)$, independent of $u$, yielding \eqref{eq:ms-short}. From \eqref{eq:qbar-eq}, $\bar{q} \propto m^{1/\gamma}$, giving \eqref{eq:qbar-short}. Finally, $\mathcal{U} = \bar{q} \cdot m_s^{1/\sigma}$ (with user utility multiplier equal to 1 since users interact directly) and $m_s$ unchanged implies $\mathcal{U}/\mathcal{U}_0 = \bar{q}/\bar{q}_0$, yielding \eqref{eq:U-short}.
\end{proof}

A corollary of the short-run results is that the share of projects that
get shared, \(m_s/m\), decreases with vibe coding. Entry rises
(\(m \uparrow\)) while variety is unchanged (\(m_s\) constant), so a
smaller fraction of projects clear the quality threshold for sharing.
The model thus predicts an increase in low-quality, unshared
projects---software that is created but never distributed. This
prediction resonates with practitioner accounts of ``AI slop'': the
proliferation of AI-generated code pushed to public repositories that
attracts no users and serves no purpose beyond its creator's immediate
needs. The curl maintainer reports that 20 percent of security reports
in 2025 are AI-generated slop, consuming hours of volunteer time to
review fabricated vulnerabilities \citep{stenberg-2025}. Meanwhile,
aggregate git pushes to public repositories continue rising
\citep{github-innovation-graph-2025}, consistent with more projects
being created even as quality thresholds tighten.

The short-run analysis holds \(\pi\) fixed, but this assumption cannot
persist. Equation \eqref{eq:inner-frechet} and
Figure \ref{fig:vibe-adoption} show that the vibe coding
share \(v\) rises steeply with AI capability \(\zeta\). When OSS
business models monetize visibility from direct users---through
documentation traffic, consulting leads, or reputation---developer
reward falls in proportion to \((1-v)\). As vibe coding spreads, the
same library may see rising downloads but falling engagement. This
erosion of the reward channel is the key mechanism behind the long-run
results.

\subsubsection{Long-Run Effect: Traditional Business
Models}\label{long-run-effect-traditional-business-models}

\begin{assumption}[Traditional Business Model]\label{ass:traditional}
Developer reward is proportional to direct user engagement:
$$
\pi = \bar{\pi}(1-v),
$$
where $\bar{\pi}$ is the baseline reward per user and $(1-v)$ is the share of users who interact directly with the project.
\end{assumption}

\begin{theorem}[Long-Run Effect of Vibe Coding]\label{thm:long-run}
Under Assumptions~\ref{ass:traditional} and \ref{ass:substitutability}, the long-run equilibrium satisfies:
\begin{align}
\frac{m}{m_0} &= (1-v)^{1 + \beta(1/\sigma - 1/\theta)/\eta}<1, \label{eq:m-long}\\
\frac{m_s}{m_{s,0}} &= (1-v)<1, \label{eq:ms-long}\\
\frac{\bar{q}}{\bar{q}_0} &= (1-v)^{\beta(1/\sigma - 1/\theta)/(\gamma\eta)}<1, \label{eq:qbar-long}\\
\frac{\mathcal{U}}{\mathcal{U}_0} &= (1-v)^{1/\sigma - 1/\theta + \beta(1/\sigma - 1/\theta)/(\gamma\eta)}. \label{eq:U-long}
\end{align}
where $\eta = 1 - \beta/\gamma$. Entry, variety, and quality all fall.
\end{theorem}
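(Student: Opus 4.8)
The plan is to treat the long-run equilibrium as an instance of the baseline equilibrium of Definition~\ref{def:baseline-eq} in which two exogenous inputs have been re-expressed as functions of the endogenous vibe-coding share $v$: the productivity multiplier becomes $u=(1-v)^{-1/\theta}$ (Equation~\eqref{eq:expected-utility-package}, now applied to developers \emph{and} final users) and the reward per user becomes $\pi=\bar{\pi}(1-v)$ (Assumption~\ref{ass:traditional}). Since $v\in(0,1)$ these are still strictly positive, so Proposition~\ref{prop:existence} applies verbatim and delivers a unique equilibrium; the work is then to compare each equilibrium object with its baseline counterpart evaluated at $v=0$ (where $u=1$, $\pi=\bar{\pi}$). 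Throughout I would use $1-v<1$ together with Assumption~\ref{ass:substitutability} ($\theta>\sigma$, hence $1/\sigma-1/\theta>0$) and $\eta=1-\beta/\gamma>0$ (from Assumption~\ref{ass:parameters}) to sign the resulting powers of $(1-v)$.

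First I would read off $m/m_0$ from the reduced-form free-entry condition~\eqref{eq:m-reduced}, which gives $m^{\eta}\propto u^{\beta}\pi^{1+\beta/\sigma-\beta/\gamma}$ with every other factor unchanged between baseline and long run. Substituting, $u^{\beta}\pi^{1+\beta/\sigma-\beta/\gamma}\propto (1-v)^{-\beta/\theta}(1-v)^{1+\beta/\sigma-\beta/\gamma}=(1-v)^{\eta+\beta(1/\sigma-1/\theta)}$, so $m^{\eta}/m_0^{\eta}=(1-v)^{\eta+\beta(1/\sigma-1/\theta)}$ and raising to the power $1/\eta$ yields~\eqref{eq:m-long}; the exponent $1+\beta(1/\sigma-1/\theta)/\eta$ is strictly positive, so $m<m_0$. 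Next, $m_s=\pi/(\tau\Lambda^{\sigma})$ from~\eqref{eq:ms-eq} is proportional to $\pi$ alone, giving $m_s/m_{s,0}=1-v<1$, which is~\eqref{eq:ms-long}. For quality I would use~\eqref{eq:qbar-eq}, $\bar{q}^{\gamma}\propto \tau m/\pi$, so $\bar{q}^{\gamma}/\bar{q}_0^{\gamma}=(m/m_0)/(\pi/\bar{\pi})=(1-v)^{1+\beta(1/\sigma-1/\theta)/\eta}/(1-v)=(1-v)^{\beta(1/\sigma-1/\theta)/\eta}$, and the $1/\gamma$ power gives~\eqref{eq:qbar-long}; the exponent is again positive, so $\bar{q}<\bar{q}_0$. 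Finally, user welfare is $\mathcal{U}=\bar{q}\,u\,m_s^{1/\sigma}$ with $u=(1-v)^{-1/\theta}$ now on the user side as well (this is the point where the long run differs from the short run); multiplying $\bar{q}/\bar{q}_0=(1-v)^{\beta(1/\sigma-1/\theta)/(\gamma\eta)}$, $u/u_0=(1-v)^{-1/\theta}$, and $(m_s/m_{s,0})^{1/\sigma}=(1-v)^{1/\sigma}$ collapses the exponents to $1/\sigma-1/\theta+\beta(1/\sigma-1/\theta)/(\gamma\eta)$, which is~\eqref{eq:U-long}.

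There is no deep obstacle: the statement is a substitution into the closed-form equilibrium objects already furnished by Proposition~\ref{prop:existence} and Corollary~\ref{cor:quality-welfare}. The one thing to handle carefully is the \emph{offsetting} structure of the two channels: $u=(1-v)^{-1/\theta}$ pushes entry, quality, and welfare up (the software-begets-software/productivity channel), while $\pi=\bar{\pi}(1-v)$ pushes them down (the demand-diversion channel), and every sign conclusion in the theorem rests on the down-channel dominating—precisely what $\theta>\sigma$ delivers, since it makes the factor $1/\sigma-1/\theta$ strictly positive in each exponent. A minor point worth flagging in the write-up is an interior-equilibrium caveat: lowering $\pi$ lowers the sharing cutoff $q_0=\bar{q}/\Lambda$, so the displayed ratios presume parameters keep $q_0\ge 1$ (equivalently, that the cutoff constraint of Definition~\ref{def:baseline-eq} remains slack); subject to that, all four ratios hold unconditionally.
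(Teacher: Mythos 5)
Your proposal is correct and follows essentially the same route as the paper's own proof: substitute $u=(1-v)^{-1/\theta}$ and $\pi=\bar{\pi}(1-v)$ into the reduced-form free-entry condition \eqref{eq:m-reduced} and the closed forms \eqref{eq:qbar-eq}--\eqref{eq:ms-eq}, then collect exponents of $(1-v)$ and sign them using $\theta>\sigma$ and $\eta>0$. The only additions beyond the paper's argument are the explicit appeal to Proposition~\ref{prop:existence} for uniqueness and the interior-equilibrium caveat $q_0\ge 1$, both of which are sensible but not needed for the stated ratios.
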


\begin{proof}
From \eqref{eq:m-reduced}, $m^{\eta} \propto \pi^{a_\pi} u^{\beta}$ where $a_\pi = 1 + \beta/\sigma - \beta/\gamma$. Substituting $\pi = \bar{\pi}(1-v)$ and $u = (1-v)^{-1/\theta}$:
$$
m^{\eta} \propto (1-v)^{a_\pi} (1-v)^{-\beta/\theta} = (1-v)^{a_\pi - \beta/\theta}.
$$
The exponent $a_\pi - \beta/\theta = 1 + \beta/\sigma - \beta/\gamma - \beta/\theta = \eta + \beta(1/\sigma - 1/\theta)$. Taking $\eta$-th roots gives \eqref{eq:m-long}. Equation \eqref{eq:ms-eq} with $\pi = \bar{\pi}(1-v)$ gives \eqref{eq:ms-long}. From \eqref{eq:qbar-eq}, $\bar{q} \propto (m/\pi)^{1/\gamma}$. The ratio $(m/\pi)/(m_0/\bar{\pi}) = (m/m_0)/(\pi/\bar{\pi}) = (1-v)^{\beta(1/\sigma - 1/\theta)/\eta}$, yielding \eqref{eq:qbar-long}. Finally, $\mathcal{U} = \bar{q} u m_s^{1/\sigma}$ gives
$$
\frac{\mathcal{U}}{\mathcal{U}_0} = \frac{\bar{q}}{\bar{q}_0} \cdot \frac{u}{u_0} \cdot \left(\frac{m_s}{m_{s,0}}\right)^{1/\sigma} = (1-v)^{\beta(1/\sigma - 1/\theta)/(\gamma\eta)} \cdot (1-v)^{-1/\theta} \cdot (1-v)^{1/\sigma},
$$
and collecting exponents yields \eqref{eq:U-long}.
\end{proof}

The long-run theorem reveals a horse race between two channels. The cost
channel (higher \(u\)) encourages entry with elasticity
\(\beta/(\theta\eta)\). The reward channel (lower \(\pi\)) discourages
entry with elasticity \((1 + \beta/\sigma - \beta/\gamma)/\eta > 1\).
The reward channel dominates: the net exponent on entry,
\(1 + \beta(1/\sigma - 1/\theta)/\eta\), is positive, so entry falls as vibe coding spreads.

The parameters \(\theta\), \(\sigma\), and \(\beta\) govern the
magnitude. When \(\theta\) is large, the productivity gain from vibe
coding is small (users are nearly indifferent between modes), so the
cost channel is weak. When \(\beta\) is small, software contributes
little to its own production, again weakening the cost channel. In both
cases, the reward channel dominates more strongly, and OSS provision
collapses faster.

The rapid empirical adoption of vibe coding---from near-zero to majority
usage within two years---tells us that either \(\theta\) or \(\zeta\)
(or both) must be large. From \(v = \zeta^\theta/(1+\zeta^\theta)\),
high \(v\) requires \(\zeta^\theta \gg 1\). But the welfare implications
differ sharply depending on which parameter drives adoption. If
\(\theta\) is large, direct and vibe-coded usage are close substitutes:
users switch easily, but the productivity gain \(u = (1-v)^{-1/\theta}\)
is modest. The cost channel is weak, and the reward channel
dominates---OSS collapses quickly. If instead \(\zeta\) is large (AI is
highly productive), the same adoption share \(v\) delivers a larger
productivity boost, strengthening the cost channel. Distinguishing these
cases empirically---through revealed preference in switching behavior or
direct productivity measurement---is crucial for predicting the long-run
trajectory of OSS provision.

\subsubsection{Alternative Business Models}\label{alternative-business-models}

The long-run analysis assumes developers monetize only through direct user engagement. This assumption abstracts from alternative business models that either monetize AI-mediated usage (for example through API fees charged to AI providers, telemetry-based attribution systems, or revenue-sharing agreements) or rely on revenue streams that are less tightly linked to how users access the software (enterprise licensing, foundation grants, or services sold to other developers).

The key observation is that per-user monetization can fall without destroying OSS: if vibe coding makes development cheaper, lower rewards can still support entry. However, monetization cannot fall too much. We now characterize the minimum monetization needed to sustain the pre-vibe-coding level of OSS entry.

Let $\pi$ denote per-user monetization when the vibe-coding adoption share is $v$, and let $\pi_0$ denote the corresponding per-user monetization in a baseline economy without vibe coding ($v=0$). We are agnostic about the specific revenue model that generates $\pi$; we only ask how large the ratio $\pi/\pi_0$ must be.

\begin{theorem}[Minimum Monetization Bound]\label{thm:min-monetization}
Under Assumptions~\ref{ass:parameters} and \ref{ass:substitutability}, define
$$
\omega \equiv \frac{1/\theta}{1/\beta + 1/\sigma - 1/\gamma} \in (0,1).
$$
Sustaining baseline entry ($m \ge m_0$) requires:
\begin{equation}\label{eq:min-monetization}
\frac{\pi}{\pi_0} \ge (1-v)^\omega.
\end{equation}
Equivalently, per-user monetization can decline by at most $1-(1-v)^\omega$.
\end{theorem}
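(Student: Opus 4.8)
The strategy is to isolate the two channels through which the vibe-coding share $v$ enters equilibrium entry and then read the bound off the closed-form free-entry condition. From \eqref{eq:m-reduced}, equilibrium $m$ satisfies $m^{\eta} = C\, u^{\beta}\, \pi^{a_\pi}$, where $\eta = 1 - \beta/\gamma > 0$, $a_\pi = 1 + \beta/\sigma - \beta/\gamma$, and $C$ collects the deep parameters $(\sigma,\gamma,\beta,\Lambda,\kappa,\tau)$ only; in particular $C$ does not depend on $v$ or on the monetization level. In the alternative-business-model environment, $v$ affects the equilibrium through exactly two objects: the productivity multiplier $u = (1-v)^{-1/\theta}$ in the development-cost function (as in Theorem~\ref{thm:long-run}), and the per-user monetization $\pi$, about whose micro-foundation we remain agnostic.

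First I would form the ratio of this free-entry relation at $(v,\pi)$ to the baseline relation at $(v=0,\pi_0)$, where $u_0 = 1$. The constant $C$ cancels, leaving
\[
\left(\frac{m}{m_0}\right)^{\eta} = (1-v)^{-\beta/\theta}\left(\frac{\pi}{\pi_0}\right)^{a_\pi}.
\]
Since $\eta > 0$, the condition $m \ge m_0$ is equivalent to the right-hand side being at least $1$, i.e.\ $(\pi/\pi_0)^{a_\pi} \ge (1-v)^{\beta/\theta}$. Because $a_\pi > \eta > 0$, the map $x \mapsto x^{a_\pi}$ is strictly increasing on $(0,\infty)$, so this is in turn equivalent to $\pi/\pi_0 \ge (1-v)^{\beta/(\theta a_\pi)}$.

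Then I would simplify the exponent: dividing numerator and denominator of $\beta/(\theta a_\pi)$ by $\beta$ gives $\beta/(\theta a_\pi) = (1/\theta)/(1/\beta + 1/\sigma - 1/\gamma) = \omega$, matching the theorem's definition and establishing \eqref{eq:min-monetization}; the equivalent phrasing that per-user monetization can fall by at most $1-(1-v)^{\omega}$ is immediate from $1 - \pi/\pi_0 \le 1 - (1-v)^{\omega}$. It remains to verify $\omega \in (0,1)$: positivity holds since $1/\theta > 0$ and $1/\beta + 1/\sigma - 1/\gamma > 0$ (using $1/\sigma > 1/\gamma$ from Assumption~\ref{ass:parameters}), and $\omega < 1$ since $1/\theta < 1 < 1/\beta < 1/\beta + 1/\sigma - 1/\gamma$, using $\theta > 1$, $\beta \in (0,1)$, and $\gamma > \sigma$.

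The proof is essentially a one-line comparison of two closed-form expressions for $m$, so the only real content---and the step to get right---is twofold: confirming that nothing on the right-hand side of \eqref{eq:m-reduced} other than $u$ and $\pi$ moves with $v$ (the sharing cutoff, the quality aggregation $\bar q = \Lambda q_0$, the variety relation $m_s = m q_0^{-\gamma}$, and the cost parameters $\tau$ and $\kappa$ are all untouched), so that the proportionality constant genuinely cancels; and tracking the direction of the inequality when taking $\eta$-th and $a_\pi$-th roots, which is harmless here because both exponents are positive. I would also note that whenever \eqref{eq:min-monetization} holds with equality, so that $m = m_0$, Proposition~\ref{prop:existence} supplies existence and uniqueness of the corresponding equilibrium (subject to the usual check $q_0 \ge 1$ in the relevant parameter region).
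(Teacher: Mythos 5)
Your proof is correct and follows essentially the same route the paper takes: the paper does not print a formal proof of Theorem~\ref{thm:min-monetization}, but the surrounding discussion derives the bound from exactly the same ratio of entry elasticities $(1+\beta/\sigma-\beta/\gamma)/\eta$ versus $\beta/\eta$ in \eqref{eq:m-reduced}, which is the content of your cancellation of $C$ and simplification of $\beta/(\theta a_\pi)$ to $\omega$. The only substantive checks---that nothing else in \eqref{eq:m-reduced} moves with $v$, that the exponents are positive so the inequalities preserve direction, and that $\omega\in(0,1)$ under Assumptions~\ref{ass:parameters} and \ref{ass:substitutability}---are all handled correctly.
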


The bound is tight because entry responds more elastically to monetization than to productivity. From Corollary~\ref{cor:comp-statics}, the elasticity of entry with respect to reward is $(1 + \beta/\sigma - \beta/\gamma)/\eta > 1$, while the elasticity with respect to the productivity shifter is $\beta/\eta < 1$. Their ratio is $1/\beta + 1/\sigma - 1/\gamma$, which is about $3.3$ under our calibration.

This asymmetry makes the quantitative implication stark. With $v=0.7$ and $\theta=3$, the productivity shifter is $u=(1-v)^{-1/\theta}=0.3^{-1/3}\approx 1.49$, which lowers development cost by a factor of $u^{-\beta}=1.49^{-1/3}\approx 0.88$, about a 12 percent reduction. Yet Theorem~\ref{thm:min-monetization} implies that per-user monetization can fall by at most $1-(1-v)^\omega \approx 11$ percent (since $(1-v)^\omega = 0.3^{0.1} \approx 0.89$). Under the traditional business model (Assumption~\ref{ass:traditional}), per-user monetization falls by 70 percent ($\pi/\pi_0 = 1-v = 0.3$), far beyond what this cost reduction can offset. The gap between a 70 percent revenue collapse and an 11 percent sustainable decline motivates alternative business models.

To connect the bound to concrete revenue models, consider a simple decomposition of per-user monetization into three components. A share $\alpha \in [0,1]$ comes from sources independent of usage mode (for example enterprise licensing or foundation grants). The remaining share $1-\alpha$ comes from usage-dependent sources. Direct users contribute the full usage-dependent amount, while vibe coders contribute a fraction $1-\rho$, where $\rho \in [0,1]$ is the \emph{vibe discount}. Normalizing baseline per-user monetization to $\pi_0=\bar\pi$, this gives
$$
\frac{\pi}{\pi_0} = \alpha + (1-\alpha)\bigl[(1-v) + (1-\rho)v\bigr] = 1 - (1-\alpha)\rho v.
$$
Setting $\alpha=0$ and $\rho=1$ recovers the traditional model, $\pi/\pi_0 = 1-v$.

\begin{corollary}[Business Model Constraint]\label{cor:bm-constraint}
Combining the monetization decomposition with Theorem~\ref{thm:min-monetization}, sustaining baseline entry requires:
$$
(1-\alpha)\rho v \leq 1 - (1-v)^\omega.
$$
For $v = 0.7$ and $\omega = 0.1$, this requires $(1-\alpha)\rho \leq 0.16$.
\end{corollary}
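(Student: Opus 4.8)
The plan is to combine two ingredients that are already in hand: the accounting identity for per-user monetization under the three-component decomposition stated just above the corollary, and the inequality from Theorem~\ref{thm:min-monetization}. First I would record the identity
$$
\frac{\pi}{\pi_0} = \alpha + (1-\alpha)\bigl[(1-v) + (1-\rho)v\bigr] = 1 - (1-\alpha)\rho v,
$$
which holds for any $\alpha\in[0,1]$, $\rho\in[0,1]$, and $v\in(0,1)$; the second equality is the one-line rearrangement $(1-v)+(1-\rho)v = 1-\rho v$. I would note in passing that the right-hand side lies in $(0,1]$, so the ratio is a genuine (weak) discount relative to baseline, consistent with the setup of the theorem.

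Next I would invoke Theorem~\ref{thm:min-monetization}: under Assumptions~\ref{ass:parameters} and \ref{ass:substitutability}, sustaining baseline entry ($m\ge m_0$) holds if and only if $\pi/\pi_0 \ge (1-v)^\omega$, with $\omega = (1/\theta)/(1/\beta + 1/\sigma - 1/\gamma)$. Substituting the identity gives $1 - (1-\alpha)\rho v \ge (1-v)^\omega$, and moving terms yields the stated constraint $(1-\alpha)\rho v \le 1 - (1-v)^\omega$. The only thing worth flagging is that the $\omega$ here is the one defined in Theorem~\ref{thm:min-monetization}, not the welfare exponent of Corollary~\ref{cor:quality-welfare}, which unfortunately shares the symbol.

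Finally, for the numerical claim I would plug in $v=0.7$ and $\omega=0.1$: $(1-v)^\omega = 0.3^{0.1} = \exp(0.1\ln 0.3) \approx 0.887$, so $1-(1-v)^\omega \approx 0.113$, and dividing by $v=0.7$ gives $(1-\alpha)\rho \le 0.113/0.7 \approx 0.16$.

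There is essentially no hard step: the corollary is a substitution of an algebraic identity into a previously proved inequality, followed by an arithmetic evaluation. The main ``obstacle'' is purely expository—making clear that the decomposition is exhaustive (every unit of per-user monetization is classified as mode-independent with weight $\alpha$, direct-usage-dependent, or vibe-usage-dependent with discount $\rho$), so that the expression for $\pi/\pi_0$ is an exact identity rather than merely a lower bound. That exactness is what licenses reading the resulting inequality as the precise condition for $m \ge m_0$ rather than only a sufficient or only a necessary one.
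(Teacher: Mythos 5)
Your proof is correct and matches the paper's (implicit) argument exactly: the paper states the decomposition $\pi/\pi_0 = 1-(1-\alpha)\rho v$ immediately before the corollary and the result is just this identity substituted into the bound $\pi/\pi_0 \ge (1-v)^\omega$ from Theorem~\ref{thm:min-monetization}, followed by the arithmetic $\bigl(1-0.3^{0.1}\bigr)/0.7 \approx 0.16$. Your side remarks (the exactness of the decomposition and the clash of the symbol $\omega$ with the welfare exponent) are accurate but not needed for the result.
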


The general constraint implies two polar cases that characterize the boundaries of viable business models.

\begin{corollary}[Minimum Vibe-Coder Contribution]\label{cor:min-vibe-contribution}
If all revenue is usage-dependent ($\alpha = 0$), sustaining baseline entry requires:
$$
\rho \leq \frac{1 - (1-v)^\omega}{v}.
$$
For $v = 0.7$ and $\omega = 0.1$, this requires $\rho \leq 0.16$, meaning vibe coders must contribute at least $1 - \rho \geq 84\%$ of what direct users contribute to monetization.
\end{corollary}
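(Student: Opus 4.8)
The plan is to specialize the Business Model Constraint (Corollary~\ref{cor:bm-constraint}) to $\alpha = 0$ and solve for $\rho$. That corollary already states that sustaining baseline entry requires $(1-\alpha)\rho v \le 1 - (1-v)^\omega$, which it obtained by substituting the decomposition $\pi/\pi_0 = 1 - (1-\alpha)\rho v$ into the Minimum Monetization Bound $\pi/\pi_0 \ge (1-v)^\omega$ of Theorem~\ref{thm:min-monetization}. So the first step is simply to set $\alpha = 0$: the decomposition collapses to $\pi/\pi_0 = 1 - \rho v$ and the constraint to $\rho v \le 1 - (1-v)^\omega$.

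The second step is to divide by $v$. Since we are in the regime $v \in (0,1)$, division is valid and order-preserving, yielding $\rho \le \bigl(1 - (1-v)^\omega\bigr)/v$, the displayed inequality. Reading this through the complementary share gives $1 - \rho \ge 1 - \bigl(1-(1-v)^\omega\bigr)/v$, i.e.\ the ``at least $1-\rho$'' statement on the vibe-coder contribution.

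The third step is the numerical illustration: substituting $v = 0.7$ and $\omega = 0.1$ gives $(1-v)^\omega = 0.3^{0.1} \approx 0.887$, so $\bigl(1-(1-v)^\omega\bigr)/v \approx 0.113/0.7 \approx 0.16$, hence $\rho \le 0.16$ and $1 - \rho \ge 0.84$, meaning vibe coders must supply at least $84\%$ of what a direct user supplies.

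There is no substantive obstacle here — all the content is in Theorem~\ref{thm:min-monetization} and the decomposition preceding Corollary~\ref{cor:bm-constraint}. The one point worth a sentence is checking the bound is non-vacuous, i.e.\ that $\bigl(1-(1-v)^\omega\bigr)/v$ lies in $[0,1]$ rather than exceeding $1$ (which would make the constraint empty). Non-negativity is immediate from $(1-v)^\omega \le 1$. For the upper bound, use $\omega \in (0,1)$ (established in Theorem~\ref{thm:min-monetization}) and $1-v \in (0,1)$ to conclude $(1-v)^\omega \ge (1-v)^1 = 1-v$, hence $1 - (1-v)^\omega \le v$ and the ratio is at most $1$; so the restriction genuinely binds for every $v > 0$.
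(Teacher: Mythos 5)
Your proof is correct and follows exactly the route the paper intends: specialize the decomposition $\pi/\pi_0 = 1-(1-\alpha)\rho v$ to $\alpha=0$, apply Theorem~\ref{thm:min-monetization}, and divide by $v>0$; the numerics $(0.3^{0.1}\approx 0.887)$ check out. The added verification that $\bigl(1-(1-v)^\omega\bigr)/v \le 1$ (via $(1-v)^\omega \ge 1-v$ for $\omega\in(0,1)$) is a worthwhile bonus the paper omits, confirming the constraint is never vacuous.
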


In other words, if AI-mediated users pay even a modestly discounted rate, the remaining monetization can quickly fall below the level needed to sustain entry.

\begin{corollary}[Minimum Usage-Independent Revenue]\label{cor:min-fixed-revenue}
If vibe coders contribute nothing to monetization ($\rho = 1$), sustaining baseline entry requires:
$$
\alpha \geq 1 - \frac{1 - (1-v)^\omega}{v}.
$$
For $v = 0.7$ and $\omega = 0.1$, this requires $\alpha \geq 0.84$, meaning at least 84\% of revenue must come from sources independent of usage mode.
\end{corollary}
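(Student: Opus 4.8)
The plan is to derive Corollary~\ref{cor:min-fixed-revenue} as the $\rho=1$ specialization of Corollary~\ref{cor:bm-constraint}, so that all the economic content is inherited from Theorem~\ref{thm:min-monetization} and only elementary rearrangement remains. The starting point is the three-component decomposition of per-user monetization, which by construction satisfies
$$
\frac{\pi}{\pi_0} = \alpha + (1-\alpha)\bigl[(1-v) + (1-\rho)v\bigr] = 1 - (1-\alpha)\rho v ,
$$
together with the threshold from Theorem~\ref{thm:min-monetization}: baseline entry $m \ge m_0$ holds if and only if $\pi/\pi_0 \ge (1-v)^{\omega}$. Imposing the hypothesis $\rho = 1$ (vibe coders contribute nothing to the usage-dependent revenue channel) reduces the monetization ratio to $1-(1-\alpha)v$, so the entry condition becomes $1-(1-\alpha)v \ge (1-v)^{\omega}$.

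From here I would simply solve for $\alpha$. Moving terms gives $(1-\alpha)v \le 1-(1-v)^{\omega}$, and since $v\in(0,1)$ is strictly positive, dividing by $v$ preserves the inequality and yields $1-\alpha \le \bigl[1-(1-v)^{\omega}\bigr]/v$, i.e.
$$
\alpha \ge 1 - \frac{1-(1-v)^{\omega}}{v} ,
$$
which is the claimed bound. The numerical statement then follows by substitution: with $v=0.7$ and $\omega=0.1$ one has $(1-v)^{\omega} = 0.3^{0.1}\approx 0.887$, so the right-hand side equals $1-(1-0.887)/0.7 \approx 1-0.162 = 0.838$, which rounds to $0.84$; equivalently, at least 84\% of revenue must come from mode-independent sources.

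I do not expect a genuine obstacle, since the only substantive fact---that equilibrium entry depends on the adoption share $v$ only through the composite $(1-v)^{\omega}$---is already established in Theorem~\ref{thm:min-monetization}. The steps that merit a clause of care are the sign of $v$ when dividing (safe, as $v>0$) and a short verification that the bound is non-vacuous: because $1-v\in(0,1)$ and $\omega\in(0,1)$ we have $(1-v)^{\omega} > 1-v$, hence $\bigl[1-(1-v)^{\omega}\bigr]/v < 1$ and the lower bound on $\alpha$ is strictly positive. As a consistency check I would also confirm the limiting case $v\to 1$: there $(1-v)^{\omega}\to 0$ and the bound degenerates to $\alpha\ge 0$, matching the intuition that when essentially all usage is AI-mediated there are no direct users whose departure can erode the reward base.
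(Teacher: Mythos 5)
Your proposal is correct and follows essentially the same route the paper takes: Corollary~\ref{cor:min-fixed-revenue} is just the $\rho=1$ specialization of the decomposition $\pi/\pi_0 = 1-(1-\alpha)\rho v$ combined with the bound $\pi/\pi_0 \ge (1-v)^\omega$ from Theorem~\ref{thm:min-monetization}, rearranged for $\alpha$. Your numerical check ($0.3^{0.1}\approx 0.887$, giving $\alpha \gtrsim 0.838$) and the non-vacuity observation are consistent with the paper.
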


This requirement is demanding: it implies that sustaining OSS without monetizing AI-mediated usage would require a large shift toward enterprise and foundation funding.

Available evidence suggests that enterprise and foundation funding is substantial but falls well short of this level in many OSS ecosystems.\footnote{We documented the composition of OSS monetization in the empirical section.} This gap implies that sustaining OSS under high vibe coding adoption likely requires either a major shift toward enterprise and foundation funding, or new mechanisms that monetize AI-mediated usage.

The latter is technologically feasible. AI platforms already meter usage at a fine granularity (tokens, tool calls, and package downloads), which in principle supports proportional revenue-sharing schemes. In the policy discussion below, we develop a ``Spotify model'' in which AI platforms compensate OSS developers based on attributable usage.

\section{Calibration}\label{calibration}
The key parameters are the software share $\beta$, quality heterogeneity $\gamma$, package substitutability $\sigma$, and usage-mode substitutability $\theta$. The vibe-coding share $v = \zeta^\theta/(1+\zeta^\theta)$ captures the intensity of the technology shock. Table~\ref{tab:calibration} summarizes calibration targets.

\paragraph{Software share ($\beta = 1/3$).} Production-function estimates from \citet{shin-2022-productivity} and \citet{friesenbichler-2024-intangible-capital} find that software and labor are substitutes, with labor elasticities of 0.6--0.7 and software/intangible elasticities of 0.03--0.36 in ICT firms. The intangible elasticity is roughly half the labor elasticity, so we set $\beta = 1/3$. We also check robustness at $\beta = 2/3$, a plausible upper bound given that roughly two-thirds of developer effort goes to implementation and optimization where OSS usage is most relevant \citep{wang-2017-effort-distribution}.

\paragraph{Quality heterogeneity ($\gamma = 3$).} Data from public GitHub repositories show that stars follow a heavy-tailed distribution well approximated by a Pareto. In the model, user counts follow a Pareto with shape $\gamma/\sigma$. Log-rank regressions yield $\gamma/\sigma \approx 2$.

\paragraph{Package substitutability ($\sigma \approx 1.5$).} No direct estimates exist for OSS. We use \citet{clements-ohashi-2005}, who estimate how video game variety affects console demand, finding $\sigma \approx 1.5$. OSS packages are likely less substitutable than video games, so this serves as an upper bound.

\paragraph{Usage-mode substitutability ($\theta = 3$--$4$).} Field experiments document that AI coding assistance raises productivity by 26--56\% \citep{peng-2023-copilot-productivity,cui-demirer-jaffe-musolff-peng-salz-2025}. Following the option-value identification in \citet{halpern-koren-szeidl-2015}, observed adoption shares and productivity gains imply $\theta \approx 2.8$. We use $\theta = 3$ and $4$ in the quantitative analysis. 

\begin{table}[t]
\centering
\caption{Calibration targets and preferred parameter values.}
\label{tab:calibration}
\small
\setlength{\tabcolsep}{4pt}
\renewcommand{\arraystretch}{1.15}
\begin{tabularx}{\textwidth}{l X X l}
\toprule
Parameter & Role in model & Calibration target & Preferred value \\
\midrule
$\beta$ & Software share in software production & Production-function evidence on software/intangibles and software--labor substitution \citep{shin-2022-productivity,friesenbichler-2024-intangible-capital} & $1/3$ \\
$\gamma$ & Tail thickness of project quality & Pareto tail of OSS outcomes (stars, downstream dependencies) & $3$ \\
$\sigma$ & Substitutability across OSS packages & Demand response to software variety in video games \citep{clements-ohashi-2005} & $\approx 1.5$ \\
$\theta$ & Substitutability between direct and vibe-coded usage modes & Experimental evidence on AI coding assistance adoption/productivity; mapped using option-value formula \citep{peng-2023-copilot-productivity,cui-demirer-jaffe-musolff-peng-salz-2025} & $3$ or $4$ \\
$\zeta$ & Relative productivity of vibe-coded usage vs direct usage & Experimental evidence on AI coding assistance productivity \citep{peng-2023-copilot-productivity,cui-demirer-jaffe-musolff-peng-salz-2025} & Implied by $(v,\theta)$ \\
$\kappa$ & Labor cost shifter in development cost & Match level of entry (package creation) & Chosen to match entry \\
$\tau$ & Fixed cost of packaging/sharing OSS & Match share of projects that are shared vs unshared & Chosen to match sharing \\
\bottomrule
\end{tabularx}
\begin{flushleft}
\footnotesize
\emph{Notes:} The table maps each parameter to an empirical object used for calibration. Parameters $(\gamma,\sigma)$ are disciplined by the heavy-tailed distribution of OSS outcomes and by external estimates of love of variety in software markets. Parameters $(\theta,\zeta)$ are disciplined using experimental evidence on AI coding assistance.
\end{flushleft}
\end{table}
\section{Conclusion}\label{sec:conclusion}

Open source software is a critical modern industry whose economic impact has been widely underappreciated. This sector grew rapidly over the past two decades, only to face the shock of AI-mediated software development, or ``vibe coding.'' This paper studies what happens to this ecosystem following a technological shock that is extremely large, extremely fast, and extremely important for the sector.

We develop a general equilibrium model of the open source software ecosystem that captures four key economic features. First, software production exhibits strong economies of scale. Second, users value variety. Third, projects differ substantially in quality, and developers observe quality ex post, leading to selection in which projects are released. Fourth, software is an intermediate input into producing more software---a richer ecosystem of existing packages lowers the cost of building new ones, creating positive feedback. In equilibrium, user adoption and developer entry jointly determine the variety and quality of available packages.

Vibe coding enters the model as an AI-mediated mode of interacting with software that operates through two channels. The cost channel raises productivity: users can adopt packages more easily, and developers can build on existing code more efficiently. The demand-diversion channel shifts users from direct interaction---reading documentation, filing bug reports, engaging with maintainers---toward AI-mediated usage. Under traditional open source business models based on visibility and engagement, this diversion erodes the revenue base that sustains contribution.

The model's central result is a horse race between these two channels. The critical threshold is the vibe discount $\rho$, which measures how much monetization falls when users switch from direct to vibe-coded interaction. When $\rho$ exceeds the ratio of substitutability parameters $\sigma/\theta$, quality declines with vibe coding adoption. Under the traditional business model where $\rho = 1$, collapse is inevitable as adoption approaches universality. The same magnification logic that made open source explode---lower costs leading to more entry, more variety, and self-reinforcing growth---operates in reverse. When monetizable demand contracts, entry falls, variety shrinks, and the resulting decline in ecosystem quality further weakens incentives for sharing.

\subsection*{Limitations}

Before turning to policy implications, we note several limitations.

The model abstracts from team collaborations; while larger teams help mitigate negative effects, the impact is marginal. The nested logit structure implies that direct and vibe-coded interactions are substitutes in user utility---defensible because the import method is unlikely to affect use, though AI may overlook aspects of large packages given context window limitations. We treat monetization as tied to user engagement, which should be interpreted broadly to include developer discovery and job opportunities. Finally, the current version excludes network effects beyond the software-begets-software channel, though such effects could accelerate negative outcomes.

The model also abstracts from a distinct enterprise software or SaaS sector. In practice, much software solves the same problem for many clients and is monetized through licensing or subscriptions, whereas the ``final user'' in our model is closer to a custom software producer solving one client or one internal use case. This scale difference changes both incentives and how AI-mediated use affects revenues: enterprise products can price access directly, while custom work is closer to an input-assembly problem. Consistent with this distinction, Grand View Research reports that large enterprises accounted for more than 69.0 percent of the 2022 open source services market by end user \citep{grand-view-research-2022-oss-services}.

The model treats packages as symmetric except for quality, but in practice some packages are infrastructure (high downstream dependence, low direct visibility) while others are applications (high visibility, low dependence). Heterogeneous exposure to vibe coding across package types could generate winners and losers even when aggregate welfare falls. Quantifying the key elasticities---the substitutability between direct and vibe-coded usage, and the software-begets-software parameter---remains essential for empirical work. The rapid diffusion of AI coding tools suggests high substitutability, but revealed-preference estimation would discipline this claim.

\subsection*{Policy Implications}

Our results yield an important policy message: under the traditional business model, where developer revenue depends entirely on direct user engagement, the open source ecosystem cannot survive widespread AI adoption. Sustainable provision requires $\rho < 1$---vibe-coded users must generate positive revenue, even if less than direct users.

Three policy directions emerge from the analysis.

First, platform-level revenue redistribution can internalize the externality. AI coding assistants already identify which packages they import and can attribute usage to specific libraries. Relative to the cost of LLM inference and existing telemetry, collecting audited data on which packages are loaded by vibe coding agents would be cheap and minimally invasive of user privacy. A ``Spotify for open source'' model---where platforms share subscription revenue with maintainers based on usage---would reduce $\rho$ toward zero. The infrastructure for such redistribution already exists; implementation requires coordination among AI providers rather than new technology.

Second, direct transfers to OSS infrastructure---foundation grants, corporate sponsorships, and government funding---can raise the baseline reward and compensate for engagement erosion. The sustainability challenges facing OSS infrastructure have been well-documented \citep{eghbal-2016}. The 2024 Open Source Software Funding Survey estimates that organizations contribute \$7.7 billion annually to OSS, with employee labor as the dominant channel \citep{boysel-nagle-carter-hermansen-crosby-luszcz-lincoln-yue-hoffmann-staub-2024-oss-funding}. GitHub Sponsors provides one mechanism for directing financial support to maintainers \citep{github-sponsors-2026}, and the npm ecosystem includes built-in funding signals through \texttt{funding} URLs \citep{npm-fund-2022}.

Third, alternative monetization channels that do not depend on direct engagement---enterprise licensing, API fees, and developer-focused services---can buffer against vibe coding. The model suggests that developer-side revenue provides partial but incomplete insurance: it reduces the rate of decline but cannot prevent collapse when $\rho = 1$.

\subsection*{Concluding Remarks}

Vibe coding represents a fundamental shift in how software is produced and consumed. The productivity gains are real and large. But so is the threat to the open source ecosystem that underpins modern software infrastructure. The model shows that these gains and threats are not independent: the same technology that lowers costs also erodes the engagement that sustains voluntary contribution.

The solution is not to slow AI adoption---the benefits are too large and the technology too useful. The solution is to redesign the business models and institutions that channel value back to OSS maintainers. Platform-level redistribution, direct transfers, and alternative monetization can close the gap between private and social returns. The infrastructure for these interventions largely exists; what is needed is coordination and will.

We view this paper not as a doom prophecy but as a call to action. The open source ecosystem emerged from a sequence of technological and institutional innovations---the internet, distributed version control, GitHub, package managers---that aligned private incentives with collective benefit. Vibe coding disrupts this alignment. Restoring it requires deliberate effort, but the stakes justify the cost.

\ifx\undefined\bysame
\newcommand{\bysame}{\leavevmode\hbox to\leftmargin{\hrulefill\,\,}}
\fi

\appendix

\section{Proofs}\label{app:proofs}

This appendix collects proofs for results stated in the main text.

\subsection{Proof of Proposition~\ref{prop:existence}}

\begin{proof}
Conditions (2)--(4) of Definition~\ref{def:baseline-eq} imply closed-form expressions for the selection cutoff $q_0$, average quality $\bar{q}$, and the mass of shared packages $m_s$.

First, the sharing indifference condition at the cutoff is
\[
\Pi(q_0) = \frac{\pi q_0^\sigma}{m_s \bar{q}^\sigma} = \tau.
\]
Using (3) $\bar{q}=\Lambda q_0$ and (4) $m_s=m q_0^{-\gamma}$, we obtain
\[
\frac{\pi q_0^\sigma}{m q_0^{-\gamma} (\Lambda q_0)^\sigma} = \tau
\quad\Longleftrightarrow\quad
\frac{\pi}{m\Lambda^\sigma} q_0^{\gamma} = \tau,
\]
so
\[
q_0 = \left(\frac{\tau m \Lambda^\sigma}{\pi}\right)^{1/\gamma},
\]
which is \eqref{eq:q0-eq}. Multiplying by $\Lambda$ gives \eqref{eq:qbar-eq}.

Next,
\[
m_s = m q_0^{-\gamma} = m \left(\frac{\tau m \Lambda^\sigma}{\pi}\right)^{-1} = \frac{\pi}{\tau\Lambda^\sigma},
\]
which is \eqref{eq:ms-eq}.

Finally, free entry requires expected net payoff to equal the up-front development cost $\Phi$:
\[
\mathbb{E}[\max\{\Pi(q)-\tau,0\}] = \Phi.
\]
Using the Pareto tail and $\Pi(q)=\tau (q/q_0)^\sigma$, the expected payoff evaluates to
\[
\mathbb{E}[\max\{\Pi(q)-\tau,0\}] = \frac{\sigma}{\gamma-\sigma}\cdot \frac{\pi}{m\Lambda^\sigma}.
\]
Substituting the cost function $\Phi=\kappa^{1-\beta}u^{-\beta}\Lambda^{-\beta\sigma/\gamma}m^{-\beta/\gamma}\left(\frac{\tau}{\pi}\right)^{\beta/\sigma-\beta/\gamma}$ and rearranging yields \eqref{eq:m-reduced}.

Since $\eta\equiv 1-\beta/\gamma>0$, the mapping $m\mapsto m^{\eta}$ is strictly increasing on $\mathbb{R}_+$, so \eqref{eq:m-reduced} pins down a unique $m>0$. Given $m$, equations \eqref{eq:q0-eq}--\eqref{eq:ms-eq} uniquely determine $(q_0,\bar{q},m_s)$. Hence a unique baseline equilibrium exists.
\end{proof}

\subsection{Proof of Theorem~\ref{thm:underprovision-baseline}}

\begin{proof}
Consider the planner problem in Definition~\ref{def:first-best},
\[
\max_{m\ge 0}\; \mathcal{U}(m) - m\,\Phi(m),
\]
where $\mathcal{U}(m)=\bar{q}(m)\,u\,m_s(m)^{1/\sigma}$ and $\Phi(m)$ is the unit development cost. Using the equilibrium relationships from Corollary~\ref{cor:quality-welfare}, welfare satisfies $\mathcal{U}(m)\propto m^{1/\gamma}$ while total development cost satisfies $m\Phi(m)\propto m^{1-\beta/\gamma}$.

In the decentralized equilibrium, entry is governed by free entry: a marginal entrant equates the expected private return from sharing to the private cost of development. The key wedge is that developers only appropriate $\pi$ per user, while the social value created for a user by an additional unit of quality is proportional to $\bar{q}u$. When $\pi<\bar{q}u$, the private return understates the social marginal benefit of additional OSS entry.

Because the equilibrium mass of entrants $m^{eq}$ is increasing in the per-user reward parameter $\pi$ (holding other primitives fixed; see Corollary~\ref{cor:comp-statics}), this wedge implies that the decentralized equilibrium features too little entry relative to the planner: $m^{eq}<m^{fb}$. Equivalently, there are too few shared OSS packages in equilibrium.
\end{proof}

\end{document}